\theoremstyle{plain}
\newtheorem{theorem}{Theorem}
\newtheorem{lemma}[theorem]{Lemma}
\newtheorem{corollary}[theorem]{Corollary}
\theoremstyle{definition}
\definecolor{darkred}{rgb}{0.5,0,0}
\definecolor{darkblue}{rgb}{0,0,0.5}
\definecolor{darkgreen}{rgb}{0,0.5,0}
\newcommand{\ENSUREGAP}{\vspace{0.2cm}}
\definecolor{gray}{gray}{0.3}
\DeclareMathOperator{\id}{id}
\DeclareMathOperator{\str}{S}
\DeclareMathOperator{\supp}{supp}
\DeclareMathOperator{\Aut}{Aut}
\newcommand{\LanguageTYPESET}[1]{\mathsf{#1}}
\newcommand{\GIlang}{\LanguageTYPESET{GI}}
\newcommand{\GAlang}{\LanguageTYPESET{GA}}
\newcommand{\AUTlang}{\LanguageTYPESET{AUT}}
\newcommand{\cGIlang}{\LanguageTYPESET{col\text{-}GI}}
\newcommand{\cGAlang}{\LanguageTYPESET{col\text{-}GA}}
\newcommand{\cAUTlang}{\LanguageTYPESET{col\text{-}AUT}}
\newcommand{\Tourlang}{\LanguageTYPESET{Tour}}
\newcommand{\AsymTourlang}{\LanguageTYPESET{AsymTour}}
\newcommand{\Asymlang}{\LanguageTYPESET{Asym}}
\newcounter{claimcounter}
\newenvironment{claim}[1][]{
  \renewcommand{\proof}{\smallskip\par\noindent\textit{Proof of the claim. }}
  \medskip\par\noindent%
  \ifthenelse{\equal{#1}{}}{%
    \setcounter{claimcounter}{0}\refstepcounter{claimcounter}\textit{Claim~\arabic{claimcounter}.}
  }{%
\ifthenelse{\equal{#1}{resume}}{%
\refstepcounter{claimcounter}\textit{Claim~\arabic{claimcounter}.}%
}{%
\textit{Claim~#1.}%
}%
}%
}{%
\par\medskip%
}
\newcommand{\uend}{\hfill$\lrcorner$}
\title{A polynomial-time randomized reduction from tournament isomorphism to tournament asymmetry}
\author{Pascal Schweitzer \\ RWTH Aachen University\\
{\tt schweitzer@informatik.rwth-aachen.de} 
}
\begin{document}

\maketitle

\begin{abstract}

The paper develops a new technique to extract a characteristic subset from a random source that repeatedly samples from a set of elements. Here a characteristic subset is a set that when containing an element contains all elements that have the same probability.

With this technique at hand the paper looks at the special case of the tournament isomorphism problem that stands in the way towards a polynomial-time algorithm for the graph isomorphism problem. Noting that there is a reduction from the automorphism (asymmetry) problem to the isomorphism problem, a reduction in the other direction is nevertheless not known and remains a thorny open problem.

Applying the new technique, we develop a randomized polynomial-time Turing-reduction from the tournament isomorphism problem to the tournament automorphism problem. This is the first such reduction for any kind of combinatorial object not known to have a polynomial-time solvable isomorphism problem.
\end{abstract}

\section{Introduction.}\label{sec:introduction}

The graph automorphism problem asks whether a given input graph has a non-trivial automorphism. In other words the task is to decide whether a given graph is asymmetric. This computational problem is typically seen in the context of the graph isomorphism problem, which is itself equivalent under polynomial-time Turing reductions to the problem of computing a generating set for all automorphisms of a graph~\cite{DBLP:journals/ipl/Mathon79}. As a special case of the latter, the graph automorphism problem obviously reduces to the graph isomorphism problem. However, no reduction from the graph isomorphism to the graph automorphism problem is known.
In fact, while many computational problems surrounding structural equivalence of combinatorial objects can all be Turing-reduced to one another, the relationship between the graph automorphism and the graph isomorphism problem remains a repeatedly posed open question (see for example \cite{DBLP:journals/iandc/AgrawalA96, DBLP:journals/corr/AllenderGM15, Ghosh2014,MR1232421}).

With Babai's new ground-breaking algorithm~\cite{DBLP:conf/stoc/Babai16} that solves the graph isomorphism problem and thereby also the graph automorphism problem in quasi-polynomial time, the question arises whether it is possible to go further and devise a polynomial-time algorithm. For such an endeavor to succeed, special cases such as the group isomorphism and the tournament isomorphism problem, for which the currently fastest algorithms have a running time of~$n^{O(\log n)}$, should also be solvable in polynomial time. Tournaments, which are graphs in which between every pair of vertices there exists exactly one directed edge, also have an automorphism problem associated with them, asking whether a given tournament is asymmetric\footnote{Many publications in the context of graph isomorphism use the term rigid graph. However, the literature is inconsistent on the notion of a rigid graph, which can for example refer to having no non-trivial automorphism or no non-trivial endomorphism. We will use the notion asymmetric, which only ever means the former. Furthermore, we suggest the name graph asymmetry problem over graph automorphism problem,  so as not to confuse it with the computational problem to compute the automorphism group.}.
Again, for this problem the currently best running time is~$n^{O(\log n)}$ and analogously to general graphs there is a simple reduction from the automorphism problem to the isomorphism problem, but no reverse reduction has been known.

In this paper we show that there is a randomized polynomial-time Turing reduction from the tournament isomorphism problem to the tournament automorphism problem. This is the first such reduction for any kind of combinatorial object (apart from polynomial-time solvable cases of course).

The main new technical tool that we develop in the first part of the paper is a technique to exploit an oracle to the graph automorphism problem in order to obtain a non-trivial automorphism-invariant partition of a graph that is finer than the orbit partition (Sections~\ref{sec:sampling:subsets}--\ref{sec:sampling:minimimal:orbits}). We call the parts of such a partition suborbits. This technique is essentially applicable to all graph classes, not just tournaments. It hinges on a method to extract a characteristic subset from a random source that repeatedly samples from a set of elements. Here we say that a set is characteristic if it is a union of level sets of the probability function.

In the second part of the paper we show that, for tournaments, access to suborbits suffices to compute automorphism groups (Section~\ref{sec:auto:group:from:suborbits}). For this we adapt the group-theoretic divide and conquer approach of Luks~\cite{DBLP:journals/jcss/Luks82} to our situation. In this second part we exploit that the automorphism group of tournaments is solvable and we leave it as an open question whether something similar can be forged that is applicable to the group isomorphism problem (see Section~\ref{sec:open:prob}).

It might be worth noting that the techniques actually do not use any of the new structural insights from the quasi-polynomial-time algorithm of~\cite{DBLP:conf/stoc/Babai16}. Rather, the randomized sampling idea is heavily based on an older practical randomized algorithm designed to quickly detect non-isomorphism~(\cite{DBLP:conf/alenex/KutzS07,SchweitzerThesis}). It appears to be one of the few cases where randomization helps to derive a theoretical result for an isomorphism problem. We also borrow some ideas from a paper of Arvind, Das, and Mukhopadhyay concerned with tournament canonization~\cite{DBLP:journals/jcss/ArvindDM10}.

The necessity for randomization to obtain theoretical results in the context of isomorphism checking appears to be quite rare. The earliest result exploiting randomization seems to go to back to Babai~\cite{BabaiRandom} and is a randomized algorithm for checking isomorphism of graphs of bounded color class size. However that algorithm is actually a Las Vegas algorithm (an algorithm that does not make errors), and in the meantime deterministic algorithms are available~\cite{DBLP:conf/focs/FurstHL80}. However, for the new reduction in this paper it seems unclear how to remove the use of randomization and even how to remove the possibility for errors.

\subsection{Related work:} With respect to related work, we focus on results concerning graph automorphism as well as results concerning tournaments and refer the reader to other texts (for example~\cite{MR1373683,DBLP:conf/stoc/Babai16,MR1232421, DBLP:journals/jsc/McKayP14,  DBLP:conf/stacs/Schweitzer15}) for a general introduction to the graph isomorphism problem, current algorithms and overviews over complexity theoretic results.

\emph{(Tournament automorphism)} Let us start by highlighting two results specifically concerned with the tournament automorphism problem. Arvind, Das, and Mukhopadhyay~\cite{DBLP:journals/jcss/ArvindDM10} show that if tournament isomorphism is polynomial-time solvable then tournament canonization can be reduced in polynomial time to canonization of asymmetric tournaments. This implies now, with the result of the current paper, that from a canonization algorithm for asymmetric tournaments we can obtain a randomized canonization algorithm for tournaments in general. (In other words, the main theorem of our paper transfers to canonization.)
On the hardness side, Wager~\cite{DBLP:conf/mfcs/Wagner07,WagnerThesis} shows that tournament automorphism is hard for various circuit complexity classes ($\LanguageTYPESET{NL}$, $\LanguageTYPESET{C_=L}$, $\LanguageTYPESET{PL}$, $\LanguageTYPESET{DET}$, $\LanguageTYPESET{MOD_kL}$) under $\LanguageTYPESET{AC^0}$ reductions.

\emph{(Graph automorphism)} A lot of information on the complexity of graph automorphism can be found in the book by K{\"o}bler, Sch{\"o}ning, and Tor{\'a}n~\cite{MR1232421}. Concerning hardness of the automorphism problem, improving previous results of Tor{\'{a}}n~\cite{DBLP:journals/siamcomp/Toran04}, Wagner shows hardness results for graphs of bounded maximum degree~\cite{DBLP:conf/sofsem/Wagner08,WagnerThesis}. Agrawal and Arvind show truth table equivalence of several problems related to graph automorphism~\cite{DBLP:journals/iandc/AgrawalA96} and Arvind, Beigel, and Lozano study modular versions of graph automorphism~\cite{DBLP:journals/siamcomp/ArvindBL00} which for~$k\in\mathbb{N}$ ask whether the number of automorphisms of a given graph is divisible by~$k$.

The graph automorphism problem is of interest in quantum computing since it can be encoded as a hidden shift problem, as opposed to the graph isomorphism problem that is only known to be encodable as a hidden subgroup problem~\cite{DBLP:journals/qic/ChildsW07,DBLP:journals/siamcomp/HallgrenRT03}.

Recently, Allender, Grochow, and Moore~\cite{DBLP:journals/corr/AllenderGM15} developed a zero-error randomized reduction from graph automorphism to~$\LanguageTYPESET{MKTP}$, the problem of minimizing time-bounded Kolmogorov complexity, a variant of the minimum circuit size problem.
In that paper they also extend this to a bounded-error randomized reduction from graph isomorphism to~$\LanguageTYPESET{MKTP}$.

\emph{(Tournament isomorphism)} Concerning the tournament isomorphism problem, the currently fastest algorithm~\cite{DBLP:conf/stoc/BabaiL83} has a running time of~$n^{O(\log n)}$.
With respect to hardness, Wagner's results for tournament automorphism also apply to tournament isomorphism~\cite{DBLP:conf/mfcs/Wagner07}.

Ponomarenko showed that isomorphism of cyclic tournaments can be decided in polynomial time~\cite{Ponomarenko1994}, where a cyclic tournament is a tournament that has an automorphism that is a permutation with a single cycle spanning all vertices. Furthermore he showed that isomorphism of Schurian tournaments can be decided in polynomial time~\cite{Ponomarenko2013}.

\section{Sampling characteristic subsets.}\label{sec:sampling:subsets}

Let~$M$ be a finite set. We define a \emph{sampler~$\str$} over~$M$ to be a probability measure~$\Pr_{\str}\colon M\rightarrow [0,1]$ on the elements of~$M$. We think of a sampler as an oracle that we can invoke in order to obtain an element of~$M$. That is, given a sampler, we can sample a sequence of elements~$m_1,\ldots,m_t$ where each~$m_i$ is sampled independently from~$M$ according to~$\Pr_{\str}$.

We call a subset~$M'$ of~$M$ \emph{characteristic} with respect to~$\str$ if for all~$m,m'\in M$ it holds that~$m\in M'$ and~$\Pr_{\str}(m') = \Pr_{\str}(m)$ implies~$m'\in M'$. Another way of formulating this condition is that~$M'$ is invariant under all probability-preserving bijections~$\varphi\colon M \rightarrow M$, that is, those bijections that satisfy~$\Pr_{\str}(m) = \Pr_{\str}(\varphi(m))$ for all~$m\in M$. 

When considering sampling algorithms we will not assume that we know the size of the set~$M$. 
Our goal is to repeatedly invoke a sampler~$M$ so as to find a characteristic subset. The main difficulty in this is that we can never precisely determine the probability~$\Pr_{\str}(m)$ of an element~$m$. Indeed, the only thing we can hope for is to get a good estimate for such a probability. The following lemma indicates that this might be helpful since the set of probabilities cannot be arbitrarily dense.

\begin{lemma}\label{lem:some:empty:interval}
Let~$\Pr_{\str}$ be a discrete probability measure on the set~$M$. Let~$P = \{\Pr_{\str}(m)\mid m\in M\}$ be the set of probabilities that occur. For every positive integer~$i$ there is a~$j\in \{6i+1,\ldots,8i\}$ such that~$[ (j-1/4)/(8i^2),(j+1/4)/(8i^2)]\cap P = \emptyset$.
\end{lemma}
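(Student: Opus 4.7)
The plan is a clean averaging / pigeonhole argument: show that if every one of the $2i$ candidate intervals contains a point of $P$, then the total probability is forced to exceed $1$, giving the required contradiction.

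First I would set $I_j = [(j-1/4)/(8i^2),(j+1/4)/(8i^2)]$ and observe that these $2i$ intervals, for $j \in \{6i+1,\ldots,8i\}$, are pairwise disjoint: the right endpoint of $I_j$ is $(j+1/4)/(8i^2)$ and the left endpoint of $I_{j+1}$ is $(j+3/4)/(8i^2)$, so consecutive intervals are separated by a gap of $1/(16i^2)$. The lower endpoint of the leftmost interval $I_{6i+1}$ equals $(6i+3/4)/(8i^2)$, which is a lower bound on every element of $\bigcup_j I_j$.

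Suppose for contradiction that for every $j\in\{6i+1,\ldots,8i\}$ there exists some probability $p_j\in P\cap I_j$. Because the $I_j$ are disjoint, the values $p_{6i+1},\ldots,p_{8i}$ are pairwise distinct, and each is realized by at least one element of $M$; hence we may choose $2i$ pairwise distinct elements $m_{6i+1},\ldots,m_{8i}\in M$ with $\Pr_{\str}(m_j)=p_j$. Summing yields
\[
1 \;=\; \sum_{m\in M}\Pr_{\str}(m)\;\ge\; \sum_{j=6i+1}^{8i} p_j \;\ge\; 2i\cdot\frac{6i+3/4}{8i^2}\;=\;\frac{3}{2}+\frac{3}{16i},
\]
which is strictly greater than $1$, a contradiction. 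Therefore at least one $j$ in the stated range satisfies $I_j\cap P=\emptyset$.

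No real obstacle is expected here; the only thing to verify carefully is the arithmetic that makes the disjointness and the lower bound on $p_j$ tight enough to force the sum past $1$. The constants $6i+1$, $8i$, and $8i^2$ in the statement are chosen precisely so that the number of intervals ($2i$) times the minimum value of a probability in them ($\approx 3/(4i)$) exceeds $1$; any looser window (e.g.\ starting the range below $6i+1$) would not give this margin.
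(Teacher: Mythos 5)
Your proposal is correct and matches the paper's proof in essence: both assume every interval meets $P$, use disjointness of the intervals to obtain $2i$ distinct elements, lower-bound each of their probabilities (you use the slightly sharper $(6i+3/4)/(8i^2)$ where the paper uses $3/(4i)$), and derive a total probability exceeding $1$. No gaps; the arithmetic checks out.
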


\begin{proof}
Suppose for all~$j\in \{6i+1,\ldots,8i\}$ there is some~$m_j$ with~$\Pr_{\str}(m_j) \in [ (j-1/4)/(8i^2),(j+1/4)/{8i^2}]$. Then~$\Pr_{\str}(m_j)\neq \Pr_{\str}(m_{j'})$ whenever~$j\neq j'$, implying in particular~$m_j \neq m_{j'}$. This yields~$2i$ distinct elements~$m_j$. Furthermore~$\Pr_{\str}(m_j)> 3/(4i)$  for all~$j\in \{6i+1,\ldots,8i\}$.
 Thus~$\Pr_{\str}(\{m_j\mid j\in\{6i+1,\ldots,8i\}\}) > 2i \cdot  3/(4i) >1$ yielding a contradiction.
\end{proof}

Using the lemma we can design an algorithm that, with high probability, succeeds at determining a characteristic set.

\begin{theorem}\label{thm:invariant:sampling}
There is a deterministic 
algorithm that, given~$\varepsilon>0$ and given access to a sampler~$S$ over an unknown set~$M$ of unknown size, runs in expected time polynomial in~$1/(\max_{m\in M}{\Pr_S(m)}) \leq |M|$ and~$\ln{1/{\varepsilon}}$ and outputs a non-empty subset of~$M$ that is characteristic with probability~$1-\varepsilon$.
\end{theorem}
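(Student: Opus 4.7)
The plan is to repeatedly invoke $S$, record empirical frequencies, and then use Lemma~\ref{lem:some:empty:interval} to locate a probability gap that cleanly separates the observed elements; thresholding the empirical frequencies at the gap will produce a characteristic output.

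Concretely, I would parametrise the computation by a positive integer $i$ and draw $t_i = \Theta(i^4 \ln(i/\varepsilon))$ independent samples from $S$. Setting $\delta := 1/(32 i^2)$, standard concentration together with the fact that at most $1/\delta = 32 i^2$ elements can have true probability at least $\delta$ allows me to arrange, with failure probability at most $\varepsilon/2^i$, that $|\hat p(m) - \Pr_S(m)| < \delta$ holds for every element $m$ carrying non-negligible weight. By the lemma there exists $j \in \{6i+1,\ldots,8i\}$ with $P \cap [(j-1/4)/(8i^2),(j+1/4)/(8i^2)] = \emptyset$; such a $j$ can be located by scanning the $2i$ candidate values and selecting one whose $\delta$-enlarged interval around $j/(8i^2)$ contains no empirical frequency. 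Setting $\tau := j/(8i^2)$, every element with true probability above the gap has $\hat p(m) > \tau$, every element with true probability below it has $\hat p(m) < \tau$, and equi-probable elements therefore lie on the same side. Hence the output $A := \{m : \hat p(m) > \tau\}$ is characteristic by construction.

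To circumvent not knowing $\max_m \Pr_S(m)$ in advance, I would run the above in stages $i = 1, 2, 4, 8, \ldots$ and return $A$ as soon as it is non-empty. Since $\tau \leq 1/i$, once $i \geq \lceil 1/\max_m \Pr_S(m) \rceil$ the maximum-probability element necessarily clears the threshold and the procedure halts; hence the final value of $i$ is $O(1/\max_m \Pr_S(m))$ and the total sample complexity is dominated by the last round, yielding the required polynomial dependence on $1/\max_m \Pr_S(m)$ and $\ln(1/\varepsilon)$. The failure probabilities telescope to $\sum_i \varepsilon/2^i \leq \varepsilon$.

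The main obstacle I expect is handling the tail of low-probability elements, of which there may be many more than $\Theta(i^2)$ and hence too many for a naive union bound over all of $M$. The fix I have in mind is to argue separately that (i)~elements with true probability at least $\delta$ are few and admit a direct Chernoff union bound over their (at most $1/\delta$) indices, while (ii)~for elements with true probability below $\delta$ it suffices to show that their empirical frequencies stay well below $\tau \geq 3/(4i) \gg \delta$, which only requires a union bound over the at most $t_i$ elements that actually appear in the sample. With these two ingredients the correctness of the thresholding step and the claimed running time follow.
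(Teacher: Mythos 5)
Your overall strategy --- estimate frequencies, use Lemma~\ref{lem:some:empty:interval} to locate a gap, threshold at the gap, and double $i$ to cope with the unknown $\max_m \Pr_S(m)$ --- is the same as the paper's. The genuine problem is in the step ``such a $j$ can be located by scanning\ldots''. You set the estimation accuracy $\delta=1/(32i^2)$, which is exactly the half-width of the interval that Lemma~\ref{lem:some:empty:interval} guarantees to be free of \emph{true} probabilities, leaving no slack between the two. Consequently the lemma does not imply that its good $j$ survives your empirical test: an element with $\Pr_S(m)$ just above $(j+1/4)/(8i^2)$ will, with probability roughly $1/2$, have $\hat p(m)$ inside $[j/(8i^2)-\delta,\,j/(8i^2)+\delta]$ even while $|\hat p(m)-\Pr_S(m)|<\delta$ holds, so the scan can reject precisely the $j$ the lemma provides. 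Symmetrically, a $j$ that the scan does accept need not have a gap in the true probabilities, so your correctness sentence (``every element with true probability above the gap\ldots below it\ldots''), which tacitly assumes no element has true probability inside the gap, is not justified for the $j$ you actually select. As written, neither the success of the scan nor the characteristicness of the thresholded set follows.

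Both issues are repaired by separating the three radii, which is exactly what the paper does: it requires accuracy $1/(2^6 i^2)$, scans for an interval of half-width $(1/8)/(8i^2)$ free of \emph{empirical} values, and uses the lemma's half-width $(1/4)/(8i^2)$; then ``accuracy $+$ scan radius $\le$ lemma radius'' makes the scan succeed whenever the estimates are good (Claim~3 there), and ``accuracy $\le$ scan radius'' makes the threshold characteristic (Claim~5). Alternatively you may keep your constants, but then you must prove existence of a passing $j$ directly, e.g.\ by rerunning the counting of Lemma~\ref{lem:some:empty:interval} at the empirical level: under your concentration events only elements with $\Pr_S(m)>3/(4i)$ can place an estimate in one of the $2i$ pairwise disjoint test intervals, there are fewer than $4i/3$ such elements, and each contributes one estimate, so some $j$ passes; and for any passing $j$ the empirical gap together with accuracy $\delta$ forces all elements with $\Pr_S(m)\ge \tau$ strictly above the threshold and all others strictly below, so the output is $\{m\mid \Pr_S(m)\ge\tau\}$. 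Two smaller points: your per-stage failure budget $\varepsilon/2^i$ needs a sample size with a $\ln(2^i/\varepsilon)=\Theta(i+\ln(1/\varepsilon))$ factor rather than $\ln(i/\varepsilon)$ (or simply aim for $\varepsilon/(ci^2)$ per stage), and the elements of probability at least $\tau$ that are never sampled must be charged to one of your union bounds, as in the third case of the paper's Claim~5; your part~(i) does cover this, but it should be said explicitly.
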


\begin{proof}
Let~$p = \max_{m\in M}{\Pr_S(m)}$ and 
let~$i = \lceil 1/p\rceil$. First note that~$|M|\geq 1/p$ and that~$i \leq 2/p \leq 2|M|$.  Let~$P  = \{\Pr_S(m) \mid m\in M\}$ be the set of values that occur as probabilities of elements in~$M$. 

The idea of the proof is to sample many times as to get good estimates for probabilities using Chernoff bounds and then to include in the output all elements with a probability above a certain threshold. 
The main difficulty of the Lemma arises from the fact that~$p$ is not known to the algorithm. We first describe an algorithm for the situation in which~$p$ is known which works in such a way so that it can be adapted in the end.

We start by sampling~$T= \max\{\left\lceil i^3 2^{17} (\ln{1/{\varepsilon'}})\right\rceil, \left\lceil i^3 2^{18} (\ln{1/{\varepsilon'}})\right\rceil^2\}$ 
elements~$m_1,\ldots,m_T$ from the sampler, where we set~$\varepsilon' = \min\{1/e,\varepsilon/8\}$. 
 We then compute for each appearing element~$m_k$ a probability estimator~$\#(m_k)$ for its probability by computing~$N(m_k)/T$ where~$N(m_k)$ is the number of times that element~$m_k$ has been sampled. Let~$Q= \{\#(m_k)\mid k\in \{1,\ldots, T\}\}$ be the set of probability estimators. Let~$\ell$ be the smallest number in~$\{6i+1,\ldots,8i\}$ such that~$[ (\ell-1/8)/(8i^2),(\ell+1/8)/{(8i^2)}]\cap Q=\emptyset$. If no such element exists, we declare the algorithm as failed. 
Otherwise, we output~$M' = \{m_k \mid \#(m_k)> \ell/(8i^2)\}$. We call~$\ell$ the \emph{cut-off}.

\bigskip

We analyze the probability that this algorithm succeeds in computing a characteristic subset. For this, let us define~$\#(x) = 0$ for~$x\in M$ whenever~$x$ does not appear among the sampled elements.

\begin{claim}\label{claim:prob}
For each element~$x\in M$, the probability that~$|\#(x)-\Pr_S(x)| \geq 1/(2^7 i^2)$ is at most~$2e^{-\frac{T}{2^{17} i^3}} \leq 2\varepsilon'$.
\end{claim}
\proof
 Consider an experiment where we sample~$T$ elements according to~$S$. We want to bound the probability that the observed~$T\cdot \#(x)$ deviates from its expected value~$\mu \coloneqq T\cdot  \Pr_{\str}(x)$ by at least~$T/(2^7i^2)$.
This deviation is at least~$\delta\mu $ if we set~$\delta \coloneqq \frac{1}{2^7 i^2  \Pr_{\str}(x)} >0 $. 
We can thus use the Chernoff bound (see~\cite[Corollary A.15, Page 515]{DBLP:books/daglib/0023084}) and conclude that the probability that
$|\#(x)-\Pr_S(x)| \geq 1/(2^7 i^2)$
is at most
 \[2e^{-\mu \min\{\delta^2/4,\delta/2\} } \leq 
 2e^{\left(-T\min\{\frac{1}{2^{16} i^4 \Pr_{\str}(x)},\frac {1}{2^8 i^2}\} \right)}
  \leq 2e^{\left(\frac{-T}{\max\{2^{17} i^3,2^8 i^2 \} }\right)} \leq  
  2e^{\left(\frac{-T}{ 2^{17} i^3}\right)}, \]
  where the second inequality uses the fact~$\Pr_{\str}(x) \leq p = 2/(2/p) \leq 2/\lceil 1/p\rceil =2/i$.
\uend

Define~$A_k$ as the event that for the~$k$-th sampled element~$m_k$ we have~$|\#(m_k) - \Pr(m_k)| \geq 1/(2^6 i^2)$. Thus the event~$A_k$ happens if~$\#(m_k)$ deviates excessively from its expected value.

\begin{claim}[resume]\label{claim:Ak}
The probability that there is a~$k\in\{1,\ldots,T\}$ such that event~$A_k$ occurs  is at most~$2\varepsilon'$.
\end{claim}

\proof
To bound the probability of event~$A_k$, we first consider~$\Pr(A_k \mid m_k = x)$, the probability of~$A_k$ under the condition that the~$k$-th sampled element~$m_k$ is equal to~$x$ for some fixed element~$x\in M$. Considering that we already know that~$m_k = x$ we need to consider an experiment where we sample~$T-1$ times independently from~$S$ and count the number of times we obtain element~$x$. This number is then~$N(m_k)-1$ since the item with number~$k$ itself adds one to the count of elements equal to~$x$.
If~\[\#(m_k)  = N(m_k)/T \notin  [ \textstyle{\Pr_S(m_k)}-1/(2^6 i^2),\textstyle{\Pr_S(m_k)}+1/(2^6 i^2)]\] then~\[(N(m_k)-1)/(T-1) 
 \notin [ \textstyle{\Pr_S(m_k)}-1/(2^7 i^2),\textstyle{\Pr_S(m_k)}+1/(2^7 i^2)],\] as shown by the simple fact that for positive integers~$2\leq a\leq b$ we have~$|a/b - (a-1)/(b-1)| \leq 1 / (b-1)$ and~$1/(2^7 i^2 ) \geq 1/(T-1)$. 

Thus in our experiment with~$T-1$ trials, the observed value~$N(m_k)-1$ must deviate from its expected value~$\mu \coloneqq (T-1) \Pr_{\str}(x)$ by at least~$ (T-1)/(2^7 i^2)$.

From the previous claim we obtain an upper bound of\[2e^{-\frac{(T-1)}{ 2^{17} i^3}} \leq 2e^{-\frac{T}{ 2^{18} i^3}},\]
where the inequality uses the fact that~$T\geq 2$.

Since this bound is independent of~$x\in M$ and since~$x$ was arbitrary, the bound is also an upper for~$\Pr(A_k)$.
   By the union bound and using~$T\geq \left\lceil i^3 (\ln{1/{\varepsilon'}}) 2^{18} \right\rceil^2$, we obtain that the probability that there is a~$k\in\{1,\ldots,T\}$ such that~$A_k$ happens is at most~\[T \cdot  2e^{-\frac{T}{2^{18} i^3}}\leq T2e^{-\sqrt{T}} {\varepsilon'} \leq 2{\varepsilon'},\]
  where the last inequality follows since~$t^2e^{-t} <1$ for~$t\geq 1$.
  \uend

\begin{claim}[resume]\label{claim:failed}
If the algorithm is declared as failed then~$A_k$ occurs for some~$k\in \{1,\ldots,T\}$.
\end{claim}

\proof
By  Lemma~\ref{lem:some:empty:interval} there is an integer~$j\in \{6i+1,\ldots,8i\}$ such that~$\Pr_{\str}(m)\notin  [(j-1/4)/(8i^2),(j+1/4)/(8i^2)]$ for all~$m\in M$. Define~$B_k$ as the event that for the~$k$-th sampled element~$m_k$ we have~$\#(m_k) \in [ (j-1/8)/(8i^2),(j+1/8)/(8i^2)]$. The algorithm can only be declared a failure if event~$B_k$ happens for some~$k\in \{1,\ldots,T\}$. However, the event~$B_k$ implies the event~$A_k$.
\uend

\begin{claim}[resume]\label{claim:empty}
Assuming the algorithm is not declared a failure, the probability that~$M'$ is empty is at most~$2\varepsilon'$.
\end{claim}
\proof
Since~$i\geq \lceil 1/p\rceil $ there is an element~$x\in M$ with~$\Pr_{\str} (x) \geq 1/i \geq \ell/{(8i^2)}$. Then the probability that~$\#(m_i) < (\ell-1/8)/{(8i^2)} \leq (8i-1/8)/(8i^2) = (1-1/(64i))/i\leq (1-1/(64i)) \Pr_{\str} (x)$ is at most~$2\varepsilon'$ by Claim~\ref{claim:prob}.
\uend

\begin{claim}[resume]\label{claim:canonical}
If~$M'$ is not characteristic then~$A_k$ occurs for some~$k\in \{1,\ldots,T\}$ with probability at least~$(1-\varepsilon')$.
\end{claim}
\proof
By~Claim~\ref{claim:failed} we can assume that the algorithm was not declared a failure. Note that, if~$M'$ is not characteristic then one of the following three things happens: there is an element~$m_k$ with~$\#(m_k)> j$ but~$Pr_S(m_k)\leq \ell$ or there is an element~$m_k$ with~$\#(m_k)\leq \ell$ but~$Pr_S(m_k)> \ell$, or ~$\#(x) = 0$ for an element with~$Pr_S(x)> \ell$.  However, by the choice of~$\ell$, we know that~$\#(m_k)\notin [ (\ell-1/8)/(8i^2),(\ell+1/8)/{(8i^2)}]$. Thus in the first two cases we conclude that event~$A_k$ occurs.
The third option is that~$\#(x) = 0$ for an element with~$Pr_S(x)> \ell$. There are at most~$1/\ell<8i^2/(6i)= 4i/3$ elements~$x$ with such a probability and for each the probability for~$\#(x) = 0$ is at most~$(1-\ell)^T\leq (1-3/(4i))^T\leq \varepsilon' 3/(4i)$. So by the union bound we obtain a total probability of at most~$\varepsilon'$.

\uend

Combining the claims we obtain that the algorithm fails with probability at most~$2\varepsilon'+ 2\varepsilon' +\varepsilon \leq 5\varepsilon' \leq \varepsilon$.

Until this point we have assumed that the value of~$p$ is known to the algorithm. To remedy this we repeatedly run the algorithm with a simple doubling technique. In each iteration we run the described algorithm assuming that~$1/p\in [i,2i]$. Here we sample~$T$ elements of~$M$. In the next iterations we replace~$i$ by~$2i$ and repeat. We also replace~${\varepsilon'}$ by~${\varepsilon'} /2$. Since~$p\geq 1/|M|$, The number of iterations is logarithmic in~$1/p$.
The total number of sampled items is at most twice the number of items sampled in the last round. Thus, overall we obtain an algorithm with expected polynomial time.
To ensure that we obtain a suitable error bound it suffices to note that the probabilities of Claims~\ref{claim:prob},~\ref{claim:Ak} and~\ref{claim:canonical} actually decrease when~$i$ is replaced by an arbitrary smaller number. Skipping the first round, we obtain an error of at most~$5\varepsilon'/2 + 5\varepsilon'/4 +5\varepsilon'/8 +\ldots \leq \varepsilon$.
(Note that this argument in particular comprises the fact that if in an iteration a set is being output by the algorithm it is still characteristic with sufficiently high probability.)
\end{proof}

We note several crucial observations about any algorithm solving the problem just described.
There is no algorithm that for every set~$M$ and sampler~$S$ always outputs the same set~$M'$ with high probability.

Indeed, consider the set~$M = \{a,b\}$. Choosing~$\Pr_S (a) = \Pr_S(b) = 1/2$ means that~$M'$ must be~$\{a,b\}$.
Choosing~$\Pr_S(a) = 1$ and~$\Pr_{\str}(b) = 0$ implies that~$M'$ must be~$\{a\}$. However, there is a continuous deformation between these two samplers, while possibilities for the set~$M'$ are discrete. It is not difficult to see that the probability distribution of the output set~$M'$ must be continuous in the space of samplers, and thus, whatever the algorithm may be, there must be samplers for which the algorithm sometimes outputs~$\{a\}$ and sometimes outputs~$\{a,b\}$.

Let us also remark that the analysis of the running time of the algorithm is certainly far from optimal. In particular a large constant of~${(2^{18})}^2$ arises only from the goal to keep the computations simple and the desire to have a bound that also holds for small values of~$|M|$. 

Once one is interested in small running times, one might even ask whether it is possible to devise an algorithm running in time sublinear in~$|M|$. However, recalling the coupon collector theorem and considering uniform samplers one realizes that one cannot expect to make do with~$o(|M|\log |M|)$ samplings. However, if the set~$M$ is of algebraic nature, for example forms a group, then there might be meaningful ways to sample characteristic substructures (see Section~\ref{sec:open:prob}).

\section{Gadget constructions for asymmetric tournaments}\label{sec:gadget:constructs}

There are several computational problems fundamentally related to the graph isomorphism problem. This relation manifests formally as polynomial-time Turing (or even many-one) reductions between the computational tasks. Such reductions are typically based on gadget constructions which we revisit in this section.

While the \emph{graph isomorphism problem}~$\GIlang$ asks whether two given graphs are isomorphic, in the search version of this decision problem an explicit isomorphism is to be found, whenever one exits.
The \emph{graph automorphism problem}~$\GAlang$  asks whether a given graph has a non-trivial automorphism (i.e., an automorphism different from the identity). In other words the task is to decide whether the given graph is asymmetric. Two other related problems are the task~$\AUTlang$ to determine generators for the automorphism group~$\Aut(G)$ and the task to determine the size of the automorphism group~$|\Aut(G)|$.
 
For all named problems there is a colored variant, where the given graphs are vertex colored and isomorphisms are restricted to be color preserving. We denote the respective problems by~$\cGIlang$,~$\cGAlang$ and~$\cAUTlang$.

It is well known that between all these computational problems -- except~$\GAlang$ -- there are polynomial-time Turing reductions (we refer for example to \cite{boothcolbourn},~\cite{MR1232421}, \cite{DBLP:journals/ipl/Mathon79}). Concerning the special case of~$\GAlang$, while there is a reduction from~$\GAlang$ to the other problems, a reverse reduction is not known.

The reductions are typically stated for general graphs, but many of the techniques are readily applicable to restricted graph classes. By a \emph{graph class} we always mean a collection of possibly directed graphs closed under isomorphism. The \emph{isomorphism problem for graphs in~$\mathcal{C}$}, denoted~$\GIlang_\mathcal{C}$, is the computational task to decide whether two given input graphs from~$\mathcal{C}$ are isomorphic. If one of the input graphs is not in~$\mathcal{C}$ the answer of an algorithm may be arbitrary, in fact the algorithm may even run forever. 
Analogously, for each of the other computational problems that we just mentioned, we can define a problem restricted to~$\mathcal{C}$ giving us for example~$\GAlang_\mathcal{C}$, and $\AUTlang_\mathcal{C}$  and the colored versions~$\cGIlang_\mathcal{C}$,~$\cGAlang_\mathcal{C}$, and~$\cAUTlang_\mathcal{C}$.

As remarked in~\cite{DBLP:journals/jcss/ArvindDM10}, most of the reduction results for general graphs transfer to the problems for a graph class~$\mathcal{C}$ if one has, as essential tool, a reduction from~$\cGIlang_\mathcal{C}$ to~$\GIlang_\mathcal{C}$.

\begin{theorem}[Arvind, Das, Mukhopadhyay~\cite{DBLP:journals/jcss/ArvindDM10}]\label{thm:reductios:relative:to:class} Suppose that for a graph class~$\mathcal{C}$ there is a polynomial-time many-one reduction from $\cGIlang_\mathcal{C}$ to~$\GIlang_\mathcal{C}$ (i.e., $\cGIlang_\mathcal{C} \leq_m^p \GIlang_\mathcal{C}$)\footnote{Let us remark for completeness that a Turing reduction assumption~$\cGIlang_\mathcal{C} \leq_T^p \GIlang_\mathcal{C}$ actually suffices for the theorem.}. Then  
\begin{enumerate}
\item $\GAlang_\mathcal{C}$ polynomial-time Turing-reduces to~$\GIlang_\mathcal{C}$ (i.e.,~$\GAlang_\mathcal{C} \leq_T^p \GIlang_\mathcal{C}$),
\item The search version of~$\GIlang_\mathcal{C}$ polynomial-time Turing-reduces to the decision version of~$\GIlang_\mathcal{C}$, and
\item $\AUTlang_\mathcal{C}$  polynomial-time Turing-reduces to~$\GIlang_\mathcal{C}$ (i.e.,~$\AUTlang_\mathcal{C} \leq_T^p \GIlang_\mathcal{C}$).\label{item:aut:to:gi}
\end{enumerate}
\end{theorem}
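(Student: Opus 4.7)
The plan is to use the assumption $\cGIlang_\mathcal{C} \leq_m^p \GIlang_\mathcal{C}$ as a black box and follow the classical Mathon-style reduction playbook. The point is that all three standard reductions go through color-individualization gadgets; the obstacle for working relative to an arbitrary graph class $\mathcal{C}$ is that the usual tricks for encoding colors structurally might leave $\mathcal{C}$. Here that obstacle is absent by hypothesis, so we work with the colored variants directly and translate back through the given reduction whenever we need to invoke the decision oracle.

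For item~1, iterate over every ordered pair $(u,v)$ of distinct vertices of the input graph $G$. For each pair form the colored graphs $G_u$ and $G_v$, where $G_x$ assigns a distinguished color to the single vertex $x$ and a common second color to all other vertices, and ask the $\cGIlang_\mathcal{C}$ oracle whether $G_u \cong G_v$. A positive answer yields a color-preserving isomorphism which, as an uncolored map, is an automorphism of $G$ sending $u$ to $v$, hence non-trivial; conversely, any non-trivial automorphism of $G$ moves some vertex. Thus $G$ is asymmetric iff all $O(n^2)$ queries return negative.

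For item~2 (search version from decision), enumerate $V(G) = \{u_1,\ldots,u_n\}$ and build an isomorphism $\varphi$ to $H$ one coordinate at a time. Having fixed $\varphi(u_1)=w_1,\ldots,\varphi(u_{k-1})=w_{k-1}$, try each candidate $w\in V(H)\setminus\{w_1,\ldots,w_{k-1}\}$ for $\varphi(u_k)$; test whether the colored graphs obtained from $G$ and $H$ by giving $u_1,\ldots,u_k$ and $w_1,\ldots,w_{k-1},w$ the distinct labels $1,\ldots,k$ (and label $0$ to all remaining vertices) are $\cGIlang_\mathcal{C}$-isomorphic, and commit to the first $w$ that succeeds. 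Since every successful query certifies that the current partial mapping extends to a full isomorphism, the final $\varphi$ is an isomorphism of $G$ with $H$. This uses $O(n^2)$ oracle calls.

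For item~3, combine items~1 and~2 in a Schreier--Sims-style stabilizer chain. Compute the orbit of $v_1$ under $\Aut(G)$ by asking, for each $v_i$, whether $G_{v_1}$ and $G_{v_i}$ are $\cGIlang_\mathcal{C}$-isomorphic; for each $v_i$ in the orbit invoke item~2 on these colored instances to extract an explicit $\sigma_i \in \Aut(G)$ with $\sigma_i(v_1) = v_i$. Recurse on the colored graph $G_{v_1}$, whose color-automorphism group is exactly the point stabilizer $\Aut(G)_{v_1}$; a generating set for $\Aut(G)$ is then obtained by uniting the transversal $\{\sigma_i\}$ with the generators returned recursively. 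The only point requiring care -- and the one that makes the class version delicate in general -- is to check that the intermediate instances we feed to the oracle really are colored instances over $\mathcal{C}$. But since all our gadgets are pure recolorings of the underlying graph, which never changes, every call is a legitimate $\cGIlang_\mathcal{C}$ query, and the hypothesis converts it into a $\GIlang_\mathcal{C}$ query, yielding the three claimed Turing reductions.
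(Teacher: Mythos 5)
Correct. The paper does not prove this theorem itself (it is imported from Arvind--Das--Mukhopadhyay), and your argument is exactly the standard relativized Mathon-style proof that the citation stands for: pure recolorings keep every query inside $\cGIlang_\mathcal{C}$, which the hypothesis converts into $\GIlang_\mathcal{C}$ queries; the only cosmetic point is that in item~3 you invoke item~2 on already-colored instances, i.e.\ the search version of $\cGIlang_\mathcal{C}$, but your item~2 argument applies verbatim there.
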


In this paper we are mainly interested in two classes of directed graphs, namely the class of tournaments~$\Tourlang$ and the class of asymmetric tournaments~$\AsymTourlang$. For the former graph class, a reduction from the colored isomorphism problem to the uncolored isomorphism problem is given in~\cite{DBLP:journals/jcss/ArvindDM10}.

\begin{theorem}[Arvind, Das, Mukhopadhyay~\cite{DBLP:journals/jcss/ArvindDM10}]\label{thm:removing:colors:for:tour:iso}
The colored tournament isomorphism problem is polynomial-time many-one reducible to the (uncolored) tournament isomorphism problem (i.e.,~$\cGIlang_\mathcal{\Tourlang} \leq_m^p \GIlang_\mathcal{\Tourlang}$).
\end{theorem}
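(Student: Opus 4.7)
The plan is to build an uncolored tournament $T^*$ from a colored tournament $(T,c)$ by a lexicographic-style substitution that encodes the color of each vertex as an asymmetric gadget tournament. Concretely, suppose $V(T) = \{v_1,\ldots,v_n\}$ and $c: V(T) \to \{1,\ldots,k\}$. I would first fix, for each color $i$, an asymmetric tournament $T_i$ so that the $T_i$ are pairwise non-isomorphic and have pairwise distinct sizes, each strictly larger than $n$. Since asymmetric tournaments exist for every order $N \geq 7$, we can simply pick $k$ orders $N_1 < \cdots < N_k$ that exceed $n$ and pick one asymmetric tournament of each order. Then replace each vertex $v_j$ by a copy $G_j$ of $T_{c(v_j)}$, and for $j \neq j'$ orient \emph{every} edge between $G_j$ and $G_{j'}$ in the direction of the arc between $v_j$ and $v_{j'}$ in $T$. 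The resulting $T^*$ is a tournament, and the construction is clearly polynomial-time.

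The easy direction is that a color-preserving isomorphism $(T,c) \to (T',c')$ lifts to an isomorphism $T^* \to (T')^*$ by mapping the gadget $G_j$ onto the gadget $G'_{\pi(j)}$ via the canonical identification of copies of $T_{c(v_j)} = T_{c'(v'_{\pi(j)})}$; all inter-gadget arcs are preserved by construction.

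The substantive direction is to recover a color-preserving isomorphism from an uncolored one $\varphi: T^* \to (T')^*$. The key claim is that the partition of $V(T^*)$ into the gadget blocks $G_1,\ldots,G_n$ is intrinsic to $T^*$, so $\varphi$ must permute blocks. I would identify the blocks as the maximal nontrivial modules of $T^*$ (a \emph{module} is a set $M$ such that every vertex outside $M$ is joined uniformly to all of $M$), noting that each $G_j$ is a module by construction. Because $|G_j| > n$, any module not contained in a single $G_j$ must contain a full gadget, and one then checks that maximal nontrivial modules are exactly the $G_j$. Thus $\varphi$ induces a bijection on blocks; two corresponding blocks are isomorphic subtournaments, and since the $T_i$ were chosen pairwise non-isomorphic (indeed of distinct sizes), each block of color $i$ maps to a block of color $i$. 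The induced bijection on indices is the desired color-preserving isomorphism between $(T,c)$ and $(T',c')$.

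The main obstacle I expect is the modular-decomposition step: one must rule out the existence of spurious modules in $T^*$ that are not gadget blocks, which could otherwise let an automorphism split and recombine parts of gadgets. Choosing gadget sizes $|T_i| > n$ and pairwise distinct isolates the gadgets by size, but a careful case analysis is needed to show that no module straddles two gadgets or lies properly inside one (the latter using asymmetry and possibly primitivity of the chosen $T_i$). Given the freedom to pick each $T_i$ from a large family of asymmetric tournaments, one can strengthen the selection—e.g., by insisting each $T_i$ is itself prime (has no nontrivial proper modules), which is generically true—so that this verification becomes routine.
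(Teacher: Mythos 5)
Your core argument---asymmetric gadgets of pairwise distinct sizes larger than $n$, with the blocks recovered as ``maximal nontrivial modules''---does not work as stated, and the hedge in your last paragraph (primeness of the gadgets) is not an optional strengthening but the essential missing ingredient. Concretely: transitive tournaments are asymmetric (all out-degrees are distinct) and exist in every order, so your selection rule allows them as gadgets; with them the reduction is simply wrong. Take $T$ to be the single arc $v_1\to v_2$ with colors $(1,2)$ and $T'$ the single arc with colors $(2,1)$; substituting transitive gadgets of sizes $N_1\neq N_2$ makes both $T^*$ and $(T')^*$ transitive tournaments on $N_1+N_2$ vertices, hence isomorphic, while $(T,c)$ and $(T',c')$ admit no color-preserving isomorphism. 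So asymmetry plus distinct sizes does not force an isomorphism of the compositions to respect the blocks. Moreover, ``maximal nontrivial modules are exactly the $G_j$'' is false whenever $T$ itself has a nontrivial module $S$ (then $\bigcup_{j\in S}G_j$ is a strictly larger module); the correct identification is that the blocks are the \emph{minimal} modules of size at least two, and proving that uses primeness twice: once to exclude a module meeting some block in a proper subset of size at least two, and once to exclude a module meeting a block in exactly one vertex while containing outside vertices (this forces that vertex to be a source or sink of the block, and a prime tournament on at least three vertices has neither). The condition $|G_j|>n$ plays no role in any of this.

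With prime, pairwise non-isomorphic gadgets the modular analysis does go through (asymmetry of the gadgets is then not even needed), but one step is still missing for a polynomial-time many-one reduction: you must \emph{construct}, deterministically and in time polynomial in the input, prime tournaments of $k\le n$ prescribed distinct orders; ``generically true'' for random tournaments is not a construction and would turn the reduction into a randomized one. Explicit families exist, but this has to be argued. For comparison, the result is taken by the paper from Arvind, Das and Mukhopadhyay, whose construction avoids all of this machinery: instead of blowing each vertex up into a gadget, one appends a directed path $u_1\to\cdots\to u_\ell$ with one vertex per color, joins $u_j$ to the original vertices according to whether their color is $j$, and adds two further vertices $a,b$ arranged so that $a$ is the unique vertex of maximum in-degree; every isomorphism then fixes the appended vertices, so the colors are encoded directly, with no modular decomposition and no existence question about special gadget tournaments. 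Your route can be repaired along the lines above, but as written the key step would fail.
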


However, for our purposes we also need the equivalent statement for asymmetric tournaments. 
Taking a closer look at the reduction described in \cite{DBLP:journals/jcss/ArvindDM10} yields the desired result. In fact it also shows that the colored asymmetry problem reduces to the uncolored asymmetry problem. Denoting for a graph class~$\mathcal{C}$ by~$\Asymlang\mathcal{C}$ the class of those graphs in~$\mathcal{C}$ that are asymmetric (i.e., have a trivial automorphism group), we obtain the following.

\begin{lemma}
\label{lem:col:tour:iso:to:tour:asym}
\begin{enumerate}
\item The isomorphism problem for colored asymmetric tournaments is polynomial-time many-one reducible to the isomorphism problem for (uncolored) asymmetric tournaments (i.e.,~$\cGIlang_\mathcal{\AsymTourlang} \leq_m^p \GIlang_\mathcal{\AsymTourlang}$).

\item The colored tournament asymmetry problem is polynomial-time many-one reducible to the (uncolored) tournament asymmetry problem (i.e.,~$\cGAlang_\mathcal{\Tourlang} \leq_m^p \GAlang_\mathcal{\Tourlang}$).
\end{enumerate}

\end{lemma}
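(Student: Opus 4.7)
The plan is to invoke the very gadget construction that Arvind, Das, and Mukhopadhyay use to prove Theorem~\ref{thm:removing:colors:for:tour:iso}, and to track not only the existence of an isomorphism but the entire automorphism group on both sides. Concretely, for every color $c$ appearing in a colored input tournament $T$, one fixes a small asymmetric tournament gadget $G_c$, with the $G_c$ pairwise non-isomorphic across colors; such a family can be generated in polynomial time, since asymmetric tournaments of every sufficiently large order exist. One then replaces each vertex $v\in V(T)$ of color $c$ by a disjoint copy $G_c^v$, and orients every edge between $G_c^v$ and $G_{c'}^u$ uniformly in agreement with the direction of the arc between $u$ and $v$ in $T$. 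Call the resulting uncolored tournament $T'$; this is the construction of~\cite{DBLP:journals/jcss/ArvindDM10}.

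The central step I would carry out is to show that the natural group homomorphism $\Phi\colon\Aut_{\mathrm{col}}(T)\to\Aut(T')$, which lifts a color-preserving $\sigma\in\Aut(T)$ to the automorphism of $T'$ that sends each $G_c^v$ to $G_c^{\sigma(v)}$ via the identity on $G_c$, is an isomorphism. Injectivity is immediate from the uniform inter-gadget orientations. For surjectivity, the key observation is that the partition of $V(T')$ into gadget copies is recoverable from $T'$ alone as the equivalence classes of an automorphism-invariant relation (two vertices lying in the same copy if and only if they share all of their neighbors outside that copy, which can be turned into a local combinatorial test using a uniform bound on the gadget sizes). Any automorphism $\pi$ of $T'$ therefore permutes gadget copies as blocks; the pairwise non-isomorphism of the $G_c$ forces the induced block permutation to preserve colors, and the rigidity of each $G_c$ forces $\pi$ to act as the identity inside each individual copy. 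The induced block permutation is then a color-preserving automorphism of $T$ whose $\Phi$-image is $\pi$.

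With $\Phi$ shown to be a group isomorphism, both parts of the lemma follow. For part~(2), $T$ admits a non-trivial color-preserving automorphism if and only if $T'$ admits a non-trivial automorphism. For part~(1), the same isomorphism sends colored asymmetric tournaments to uncolored asymmetric tournaments, and applying $\Phi$ to the disjoint union $T_1\sqcup T_2$ shows that colored isomorphism of asymmetric $T_1,T_2$ is equivalent to uncolored isomorphism of $T'_1,T'_2$, so the construction restricts to a reduction within the class $\AsymTourlang$. The main obstacle I expect is the structural identification of gadget copies from inside an arbitrary $\pi\in\Aut(T')$; once that is in hand, everything else is a routine group-theoretic bookkeeping that is essentially already carried out (for the isomorphism version) in~\cite{DBLP:journals/jcss/ArvindDM10}.
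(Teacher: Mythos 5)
There is a genuine gap, and it sits exactly at the step you yourself flag as central: that every automorphism of the blown-up tournament $T'$ permutes the gadget copies as blocks. (Incidentally, your construction is not the one from~\cite{DBLP:journals/jcss/ArvindDM10} that the paper invokes: there, new vertices are \emph{added} -- a directed path $u_1\to\cdots\to u_\ell$ encoding the colors plus two apex vertices $a,b$ -- and one shows these are fixed pointwise by every automorphism, so no block-preservation question ever arises.) Your proposed recovery of the copy partition, ``two vertices lie in the same copy iff they share all their neighbours outside the copy,'' refined by a bound on gadget sizes, is false whenever the base tournament has twins. If $u\to w$ and $u,w$ have the same orientation to every other vertex of $T$, then a vertex in the copy of $u$ and a vertex in the copy of $w$ can have \emph{identical} in- and out-neighbourhoods in all of $T'$ (e.g.\ blow up the transitive tournament on $u\to w\to z$, $u\to z$ by the two-vertex gadget: the second vertex of the $u$-copy and the first vertex of the $w$-copy are twins in $T'$), so no local test of this kind separates copies; the copies are modules of $T'$ but in general not strong modules, and $T'$ has modules straddling several copies, so ``$\varphi$ maps modules to modules'' does not yield block preservation.

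Worse, as stated the construction can return wrong answers. You allow gadgets of different sizes, only requiring asymmetry and pairwise non-isomorphism; nothing then prevents the colour-$d$ gadget $G_d$ from being isomorphic to the tournament obtained from two disjoint copies of $G_c$ with all edges oriented from the first copy to the second (this is asymmetric whenever $G_c$ is). Then the single vertex coloured $d$ and the arc on two vertices both coloured $c$ are non-isomorphic asymmetric coloured tournaments whose images under your reduction are isomorphic, so the many-one reduction claimed in part (1) fails. To salvage the blow-up route you would need at least gadgets of equal size that are prime (modularly indecomposable), together with a modular-decomposition argument that fibres map to fibres -- none of which appears in your write-up. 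The paper's proof avoids all of this: the attached flag is rigid, $a$ has unique maximum in-degree, hence $b$ and each $u_j$ are fixed by every automorphism/isomorphism, colours are read off from adjacency to the $u_j$, and therefore $\Aut(T')$ coincides with the colour-preserving automorphism group of $T$, which yields both parts of the lemma directly.
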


\begin{proof}[Proof sketch]
In~\cite{DBLP:journals/jcss/ArvindDM10} given two colored tournaments~$T_1$ and~$T_2$, a gadget construction is described that adds new vertices to each tournament yielding~$T_1'$ and~$T_2'$ so that~$T_1\cong T_2 \Leftrightarrow  T'_1\cong T'_2$. 
The authors show that every automorphism of~$T'_i$ fixes the newly added vertices. However, from the construction it is clear that~$T_1$ is asymmetric if and only if~$T'_i$ is asymmetric, since all vertices that are added must be fixed by every automorphism. This demonstrates both parts of the lemma. 
We sketch a gadget construction that achieves these properties and leave the rest to the reader. For each~$i\in \{1,2\}$ the construction is as follows. Suppose without loss of generality that the colors of~$T_i$ are~$\{1,\ldots,\ell\}$ with~$\ell\geq 2$.
We add a directed path~$u_1\rightarrow \ldots\rightarrow u_{\ell}$ to the graph. A vertex~$v\in V(T_i)$ has~$u_j$ as in-neighbor if~$j$ is the color of~$v$. Otherwise~$u_j$ is an out-neighbor of~$v$. We add two more vertices~$a$ and~$,b$ to the graph. The only out-neighbor of vertex~$a$ is~$b$. The in-neighbors of~$b$ are the vertices in~$\{a,u_1,\ldots,u_{\ell}\}$. It can be shown that~$a$ is the unique vertex with maximum in-degree. This implies that~$b$ and thus all~$u_j$ are fixed by all automorphisms.
\end{proof}

As mentioned above, reductions for computational problems on general graphs can often be transferred to the equivalent problems restricted to a graph class~$\mathcal{C}$. However, let us highlight a particular reduction where this is not the case. Indeed, it is not clear how to transfer the reduction from~$\GIlang$ to~$\AUTlang$ (which involves taking unions of graphs) to a reduction from~$\GIlang_\mathcal{C}$ to~$\AUTlang_\mathcal{C}$, even when provided a reduction of~$\cGAlang_\mathcal{C}$ to~$\GIlang_\mathcal{C}$. 
For the class of tournaments however, we can find such a reduction, of which we can make further use. 

\begin{lemma}\label{lem:asym:iso:red:to:asym} 
\begin{enumerate}
\item  The isomorphism problem for tournaments polynomial-time  Turing-reduces to
the task to compute a generating set for the automorphism group of a tournament (i.e., $\cGIlang_\mathcal{\Tourlang} \leq_T^p \AUTlang_\mathcal{\Tourlang}$). \label{item:gi:to:aut}

\item The isomorphism problem for colored asymmetric tournaments is polynomial-time many-one  reducible to tournament asymmetry (i.e., $\cGIlang_\mathcal{\AsymTourlang} \leq_m^p \GAlang_\mathcal{\Tourlang}$). \label{item:asym:iso:to:asym}
\item The search version of the isomorphism problem for colored asymmetric tournaments Turing-reduces to tournament asymmetry.\label{item:asym:iso:to:asym:search}
\end{enumerate}
\end{lemma}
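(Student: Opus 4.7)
The plan is to build a single ``three-copy cyclic'' tournament gadget that handles all three items uniformly. Given two uncolored tournaments $T_1, T_2$ on $n$ vertices, I form the tournament $T$ on the vertex set $V_1 \sqcup V_2 \sqcup V_3 \sqcup \{a_1, a_2, a_3\}$ (with $|V_i| = n$) as follows: place $T_1$ on both $V_1$ and $V_2$, place $T_2$ on $V_3$, orient all cross arcs cyclically so that every arc between $V_i$ and $V_{i+1}$ (indices modulo $3$) points from $V_i$ to $V_{i+1}$, and attach a directed $3$-cycle $a_1 \to a_2 \to a_3 \to a_1$ in which $a_i$ dominates $V_i$ and is dominated by every $V_j$ with $j \neq i$.

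The core structural claim to establish is
\[
|\Aut(T)| \;=\;
\begin{cases}
3\,|\Aut(T_1)|^3 & \text{if } T_1 \cong T_2, \\
|\Aut(T_1)|^2\,|\Aut(T_2)| & \text{if } T_1 \not\cong T_2,
\end{cases}
\]
with $\Aut(T)$ realised as the wreath product $\Aut(T_1) \wr \mathbb{Z}/3$ in the first case and as the direct product $\Aut(T_1) \times \Aut(T_1) \times \Aut(T_2)$ in the second. I would prove this by a short out-degree count: the $a_i$ are the unique vertices of out-degree $n+1$ in $T$, so $\{a_1, a_2, a_3\}$ is preserved set-wise by any automorphism; the induced $3$-cycle has $\mathbb{Z}/3$ as automorphism group, and because $V_i = N^+(a_i) \setminus \{a_{i+1}\}$ the same cyclic shift $k$ is forced on the triple $(V_1, V_2, V_3)$. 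For each $k$, the restriction $\tau|_{V_i}$ must be an isomorphism of the induced tournament on $V_i$ to that on $V_{i+k}$, and a shift with $k \neq 0$ exists precisely when $T_1 \cong T_2$.

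With the gadget in hand, item~1 follows by first applying Theorem~\ref{thm:removing:colors:for:tour:iso} to reduce $\cGIlang_\Tourlang$ to $\GIlang_\Tourlang$, then issuing three $\AUTlang_\Tourlang$ queries (on $T_1$, $T_2$, and $T$) and checking whether the ratio $|\Aut(T)| / (|\Aut(T_1)|^2\,|\Aut(T_2)|)$ equals $3$ (iso) or $1$ (non-iso). For item~2 I first apply Lemma~\ref{lem:col:tour:iso:to:tour:asym}.1 to pass from $\cGIlang_\AsymTourlang$ to $\GIlang_\AsymTourlang$; when both inputs are asymmetric the formula collapses to $|\Aut(T)| \in \{1, 3\}$, so $T$ fails to be asymmetric exactly when $T_1 \cong T_2$, yielding the many-one reduction to $\GAlang_\Tourlang$. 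Item~3 is then a routine individualisation loop: enumerate $V(T_1)$, and for the current vertex $v_1$ search over candidates $v_2 \in V(T_2)$ by applying item~2 to the pair coloured by a fresh shared colour on $\{v_1, v_2\}$; by asymmetry exactly one $v_2$ is accepted, which is declared to be $\sigma(v_1)$, after which the loop proceeds with both vertices individualised.

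The hardest part is the structural claim about $\Aut(T)$, namely ruling out exotic automorphisms that do not arise from the wreath-product form. The degree-based anchoring of the auxiliary $3$-cycle does most of the work, and it is crucial here that tournament automorphism groups are known to have odd order, which a priori precludes the otherwise natural $2$-transposition swap of two of the $V_i$. Once the structural claim is in place, items~1 and~2 become arithmetic on the oracle's output, and item~3 is textbook individualisation over polynomially many candidate pairs.
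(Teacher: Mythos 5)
Your proposal is correct and takes essentially the same route as the paper: your cyclic three-block tournament (two copies of $T_1$, one of $T_2$, cross arcs oriented cyclically) is exactly the paper's $\mathrm{Tri}(T_1,T_2)$ gadget, and items 1--3 are obtained from it in the same way (color removal first, then non-trivial automorphisms of the gadget correspond to block rotations, then individualization for the search version). The only cosmetic differences are your auxiliary apex $3$-cycle for anchoring the blocks -- the paper instead fixes the block partition by counting common out-neighbours, and the odd-order fact you invoke is not needed since the cyclic arc orientation already rules out block transpositions -- and your use of group-order arithmetic for item 1 where the paper just inspects which generators move vertices between blocks.
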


\begin{figure}
\centering
\scalebox{0.9}{
\begin{tikzpicture}
\node[circle, draw, inner sep=10, outer sep =3] (T1) at (-0,2) {$T_1$};
\node[circle, draw, inner sep=10, outer sep =3] (T2) at (-4,0) {$T_2$};
\node[circle, draw, inner sep=10, outer sep =3] (T1p) at (-0,-2) {$T'_1$};

\draw[transform canvas={yshift=0.3ex},->, ultra thick] (T1.290) -- (T1p.70);
\draw[transform canvas={yshift=0.3ex},->, ultra thick] (T1.270) -- (T1p.90);
\draw[transform canvas={yshift=0.3ex},->, ultra thick] (T1.250) -- (T1p.110);

\draw[transform canvas={yshift=0.3ex},->, ultra thick] (T1p.160) -- (T2.-30);
\draw[transform canvas={yshift=0.3ex},->, ultra thick] (T1p.180) -- (T2.-50);
\draw[transform canvas={yshift=0.3ex},->, ultra thick] (T1p.140) -- (T2.-10);

\draw[transform canvas={yshift=0.3ex},->, ultra thick] (T2.30) -- (T1.-160);
\draw[transform canvas={yshift=0.3ex},->, ultra thick] (T2.50) -- (T1.-180);
\draw[transform canvas={yshift=0.3ex},->, ultra thick] (T2.10) -- (T1.-140);

\node at (2.5,-2) {$T_1\cong T_2$};
\end{tikzpicture}}
\caption{A visualization of the triangle tournament~$\mathrm{Tri}(T_1,T_2)$.}
\label{fig:tri}
\end{figure}
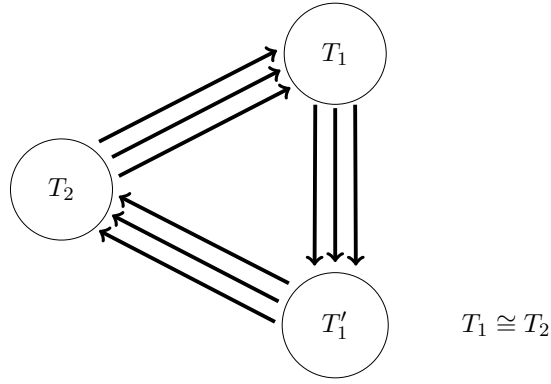
\begin{proof}
Suppose we are given two tournaments~$T_1$ and~$T_2$ on the same number of vertices~$n$ for which isomorphism is to be decided. By Theorem~\ref{thm:removing:colors:for:tour:iso} we can assume that the tournaments are uncolored. 
Let~$\mathrm{Tri}(T_1,T_2)$ be the tournament obtained by forming the disjoint union of the three tournaments~$T_1$,~$T_1'$ and~$T_2$ where~$T_1\cong T_1'$ .
We add edges from all vertices of~$T_1$ to all vertices of~$T_1'$, from all vertices of~$T_1'$ to all vertices of~$T_2$ and from all vertices of~$T_2$ to all vertices of~$T_1$ (see Figure~\ref{fig:tri}).
We observe that two vertices that are contained in the same of the three sets~$V(T_1)$,~$V(T_2)$,~$V(T_1')$ have~$n$ common out-neighbors. However, two vertices that are not contained in the same of these three sets have at most~$n-1$ common out-neighbors.
We conclude that an automorphism of~$\mathrm{Tri}(T_1,T_2)$ preserves the partition of~$V(\mathrm{Tri}(T_1,T_2))$ into the three sets~$V(T_1)$,~$V(T_1')$ and~$V(T_2)$. 

Given a generating set for~$\Aut(\mathrm{Tri}(T_1,T_2))$ it holds that there is some generator that maps a vertex from~$V(T_1)$ to a vertex from~$V(T_2)$ if and only if~$T_1$ and~$T_2$ are isomorphic. This proves the first part of the lemma.

Suppose additionally that~$T_1$ and~$T_2$ are asymmetric.
We then further conclude that the tournament~$\mathrm{Tri}(T_1,T_2)$ has a non-trivial automorphism if and only if~$T_1$ and~$T_2$ are isomorphic. This shows that the decision version of asymmetric tournament isomorphism reduces to tournament asymmetry.
Since the search version is Turing-reducible to the decision version of isomorphism (Theorem~\ref{thm:reductios:relative:to:class}) this finishes the proof. 
\end{proof}

For Turing reductions, the converse of the previous lemma also holds.
 In fact the converse holds for arbitrary graph classes. 

\begin{lemma} \label{lem:asym:red:to:asym:iso} 
Let~$\mathcal{C}$ be a graph class.
\begin{enumerate}
\item  The task to compute a generating set for the automorphism group of graphs in~$\mathcal{C}$ Turing-reduces to the isomorphism problem for colored graphs in~$\mathcal{C}$ (i.e.,~$\AUTlang_\mathcal{\mathcal{C}} \leq_T^p \cGIlang_\mathcal{\mathcal{C}}$).

\item Asymmetry checking for graphs in~$\mathcal{C}$ polynomial-time Turing-reduces to isomorphism checking of asymmetric colored graphs in~$\mathcal{C}$ (i.e., 
 $\GAlang_\mathcal{\mathcal{C}} \leq_T^p \cGIlang_\mathcal{\Asymlang\mathcal{C}}$).
\end{enumerate}
\end{lemma}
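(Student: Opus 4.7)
For part 1 the plan is to apply the classical Mathon-style orbit and stabilizer-chain reduction. Fix an arbitrary ordering $v_1,\ldots,v_n$ of $V(G)$. Iteratively, at step~$i$, having individualized $v_1,\ldots,v_{i-1}$ with pairwise distinct colors, compute the orbit of~$v_i$ in the pointwise stabilizer $\Aut(G)_{v_1,\ldots,v_{i-1}}$ by querying $\cGIlang_\mathcal{C}$ on the two colored graphs obtained from~$G$ by giving a fresh common color to~$v_i$ and to each candidate~$u$, respectively. A positive answer, combined with the standard search-to-decision reduction for $\cGIlang_\mathcal{C}$ (which follows from Theorem~\ref{thm:reductios:relative:to:class}), yields an explicit automorphism~$\pi$ of~$G$ fixing~$v_1,\ldots,v_{i-1}$ with $\pi(v_i) = u$, which is added to a growing set~$S$. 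A standard orbit/stabilizer argument (in the spirit of Schreier's lemma) shows that~$S$ generates $\Aut(G)$.

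For part 2 I plan to run the same Mathon-style procedure but with the weaker oracle $\cGIlang_{\Asymlang\mathcal{C}}$ and an explicit verification layer: whenever the oracle returns ``yes'', the search-to-decision reduction (now restricted to $\Asymlang\mathcal{C}$) delivers a candidate permutation~$\pi$, which we verify independently by directly checking that~$\pi$ is an edge-preserving bijection of~$G$ sending the intended individualized vertex to its claimed image. Any verified non-trivial such~$\pi$ certifies that~$G$ is not asymmetric and the algorithm halts with that answer; if none is ever found, we output ``asymmetric''. When $G$ is itself asymmetric (so $G\in\Asymlang\mathcal{C}$) every coloring of~$G$ has underlying uncolored graph in~$\Asymlang\mathcal{C}$, so every oracle query is valid, the oracle is truthful, and only the identity can be produced, giving the correct output.

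The main obstacle is to guarantee that in the case $G\notin\Asymlang\mathcal{C}$ a verifiable witness is actually produced, even though queries on colorings of~$G$ formally fall outside the oracle's guaranteed domain. My plan for this is to replace coloring-individualization by \emph{structural} individualization: instead of merely recoloring the individualized vertex, we attach to it a rigid asymmetric gadget in the spirit of the construction sketched for Lemma~\ref{lem:col:tour:iso:to:tour:asym}. Once a set of individualized vertices contains a base of~$\Aut(G)$, the modified underlying uncolored graph becomes asymmetric and lies in~$\Asymlang\mathcal{C}$; the queries comparing two candidate extensions of the current individualization then become valid, and a ``yes'' answer translates, via verification, into a genuine non-trivial element of~$\Aut(G)$. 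The remaining bookkeeping needed to carry this out for an arbitrary class~$\mathcal{C}$ parallels the proof sketch of Lemma~\ref{lem:col:tour:iso:to:tour:asym} and reduces to standard gadget constructions together with orbit/stabilizer arguments.
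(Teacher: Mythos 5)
Your part~1 is the standard Mathon-type stabilizer/orbit reduction and agrees with the paper. The problem is in part~2, concretely in the ``structural individualization'' step on which your completeness argument rests. The lemma is asserted for an \emph{arbitrary} graph class~$\mathcal{C}$, with no closure properties assumed; attaching a rigid gadget to a vertex of~$G$ in general produces a graph that is no longer in~$\mathcal{C}$ (it cannot be done naively for tournaments, regular graphs, and most other natural classes), and on inputs whose underlying graph lies outside~$\Asymlang\mathcal{C}$ the oracle~$\cGIlang_{\Asymlang\mathcal{C}}$ gives no guarantee at all --- it may answer arbitrarily or not halt. The gadget you point to from Lemma~\ref{lem:col:tour:iso:to:tour:asym} is tournament-specific; there is no ``standard gadget construction'' available for a generic~$\mathcal{C}$, so the key step of your plan (making the underlying uncolored graph asymmetric while staying inside the class, and moreover with pairwise distinguishable gadgets so that they emulate individualization) cannot be carried out at the stated level of generality. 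This is a genuine gap, not bookkeeping.

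The paper's proof needs no gadgets, and the idea you are missing is an existence argument inside the colorings themselves. Read the promise as the paper uses it: the inputs are vertex-colored graphs with underlying graph in~$\mathcal{C}$ that are asymmetric \emph{as colored graphs} (no nontrivial color-preserving automorphism); colorings of~$G$ never leave this ambient class. One makes exactly your Mathon-style queries on pairs $G_{(v_1,\ldots,v_{t-1},v_t)}$, $G_{(v_1,\ldots,v_{t-1},v_{t'})}$, runs the search version by further individualization, and verifies every returned map, so out-of-promise answers can never cause an error (soundness is as you propose). For completeness, if~$G$ has a nontrivial automorphism, let~$j$ be least such that~$G_{(v_1,\ldots,v_j)}$ is asymmetric as a colored graph; then~$j<n$, some automorphism fixes~$v_1,\ldots,v_{j-1}$ and moves~$v_j$ to some~$v_{t'}$ with~$t'>j$, and the two colored graphs $G_{(v_1,\ldots,v_{j-1},v_j)}$ and $G_{(v_1,\ldots,v_{j-1},v_{t'})}$ are isomorphic and both asymmetric as colored graphs, so this particular query (and the individualizations made during the search) lies \emph{inside} the promise and the oracle must deliver a certifiable nontrivial automorphism of~$G$. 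This minimal-individualization argument is what resolves the obstacle you correctly identified, and it makes the gadget construction --- and the closure assumptions it would silently impose on~$\mathcal{C}$ --- unnecessary.
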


\begin{proof}

The proof of the first part is a well known reduction that already appears in~\cite{DBLP:journals/ipl/Mathon79}. We can also see it by applying Part~\ref{item:aut:to:gi} of Theorem~\ref{thm:reductios:relative:to:class} to the class of colored graphs in~$\mathcal{C}$. 

For the second part, assume we have an oracle~$O_1$ for isomorphism checking of colored asymmetric graphs in~$\mathcal{C}$. Then we also have an oracle~$O_2$ for the search-version of isomorphism checking of colored asymmetric graphs in~$\mathcal{C}$. Indeed, we can find an isomorphism by individualizing more and more vertices in both graphs while keeping the graphs isomorphic. When all vertices are singletons, there is only one option for the isomorphism.

Now let~$G$ be a graph in~$\mathcal{C}$. Without loss of generality assume that~$V(G)  = \{v_1,\ldots,v_n\}$.
For every~$t,t'\in \{1,\ldots,n\}$ with~$t>t'$ we call~$O_2(G_{(v_1,\ldots,v_{t-1},v_t)},G_{(v_1,\ldots,v_{t-1},v_{t'})})$. 
Here the notation~$G_{(u_1,\ldots,u_{\ell})}$ denotes the graph~$G$ colored such that the color of~$u_i$ is~$i$ and vertices not in~$\{u_1,\ldots,u_{\ell}\}$ have color 0. (With respect to the partition of the vertices into color classes this is the same as constructing the graph obtained from~$G$ by individualizing~$u_1,\ldots,u_{\ell}$ one after the other.)
If we find an isomorphism among the calls then this isomorphism is non-trivial since it maps~$v_t$ to~$v_{t'}$ and thus~$G$ is not asymmetric. Conversely if~$G$ is not asymmetric, then let~$j$ be the least integers for which~$G_{(v_1,\ldots,v_{j})}$ is asymmetric. Then~$j<n$, (since~$G_{(v_1,\ldots,v_{n-1})}$ is always asymmetric) and there is a~$t'>j$ such that~$G_{(v_1,\ldots,v_{j-1},v_{j})}$ and~$G_{(v_1,\ldots,v_{j-1},v_{t'})}$ are isomorphic. This isomorphism will be found by the oracle.

While the oracle~$O_2$ can sometimes output incorrect answers, namely when one of the inputs is not asymmetric, $O_2$ is certifying in the sense that we can check whether a given answer is really an isomorphism. Thus, we avoid making any errors whatsoever.
\end{proof}

\section{Invariant automorphism samplers from asymmetry tests}

As discussed before, the asymmetry problem of a class of graphs reduces to the isomorphism problem of graphs in this class. However, whether there is a reduction in the reverse, or whether the asymmetry problem may actually be computationally easier than the isomorphism problem is not known. To approach this question, we now explore what computational power we could get from having available an oracle for the asymmetry problem.

An \emph{invariant automorphism sampler for a graph~$G$} is a sampler over~$\Aut(G)\setminus\{\id\}$ which satisfies the property that if~$\Pr_{\str}(\varphi) = p$ then~$\Pr_{\str}(\psi^{-1}\circ \varphi \circ \psi) = p$ for all~$\psi \in \Aut(G)$.
We first show how to use an oracle for asymmetry to design an invariant automorphism sampler for a tournament~$T$. 

\begin{lemma}\label{lem:from:tour:asym:to:aut:sampler}
Given an oracle for asymmetry of tournaments ($\GAlang_\mathcal{\Tourlang}$)
we can construct for every given colored (or uncolored) tournament~$T$ that is not asymmetric an invariant automorphism sampler.
The computation time (and thus the number of oracle calls) required to sample once from~$\str$  is polynomial in~$|V(T)|$.
\end{lemma}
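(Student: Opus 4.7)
The plan is to reduce the sampling to drawing a uniformly random total ordering $v_1, \ldots, v_n$ of $V(T)$ and then applying a deterministic procedure that turns such an ordering into a non-identity automorphism using at most $O(|V(T)|)$ calls to the asymmetry oracle. The key observation that makes invariance essentially automatic is that the uniform distribution on orderings is preserved by the natural coordinate-wise action of $\Aut(T)$; if the deterministic post-processing is $\Aut(T)$-equivariant, the output distribution will inherit conjugation-invariance.

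Given such an ordering, I would use the asymmetry oracle (with Lemma~\ref{lem:col:tour:iso:to:tour:asym} to handle colors) to find the smallest $k$ for which the individualized tournament $T_{(v_1, \ldots, v_k)}$ is asymmetric. Some $k \in \{1, \ldots, n-1\}$ exists: $k = 0$ is ruled out by the hypothesis that $T$ is not asymmetric, while $k = n-1$ always suffices since any automorphism fixing $n-1$ distinct singleton colors must be the identity. By the minimality of $k$, the group $\Aut(T_{(v_1, \ldots, v_{k-1})})$ is non-trivial and every non-identity element must move $v_k$, so the orbit $O$ of $v_k$ has size at least $2$. Next, I would compute $O$ by testing, for each $u \in V(T) \setminus \{v_k\}$, whether $T_{(v_1, \ldots, v_{k-1}, u)}$ is asymmetric and isomorphic to $T_{(v_1, \ldots, v_{k-1}, v_k)}$; both checks reduce to tournament asymmetry via Lemma~\ref{lem:asym:iso:red:to:asym}, part 2. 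I would then select $v'$ as the element of $O \setminus \{v_k\}$ that appears earliest in the random ordering and invoke the search version from Lemma~\ref{lem:asym:iso:red:to:asym}, part 3, to produce the (necessarily unique) isomorphism $\varphi$ from $T_{(v_1, \ldots, v_{k-1}, v_k)}$ to $T_{(v_1, \ldots, v_{k-1}, v')}$. By construction $\varphi \in \Aut(T)$ and $\varphi(v_k) = v' \neq v_k$, so $\varphi \neq \id$, and this is what the sampler outputs.

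The running time is manifestly polynomial since only $O(|V(T)|)$ oracle calls are made, each wrapped in the polynomial-time reductions cited above. The main obstacle, and the point that really has to be argued, is invariance. Under any $\psi \in \Aut(T)$ the ordering $(v_1, \ldots, v_n)$ maps to $(\psi(v_1), \ldots, \psi(v_n))$; the stopping index $k$ is unchanged because $\psi$ is an isomorphism of all the colored tournaments compared at each step; the orbit of $\psi(v_k)$ in $\Aut(T_{(\psi(v_1), \ldots, \psi(v_{k-1}))}) = \psi \Aut(T_{(v_1, \ldots, v_{k-1})}) \psi^{-1}$ equals $\psi(O)$; the earliest-in-the-ordering choice of $v'$ commutes with $\psi$, producing $\psi(v')$; and since both individualized tournaments are asymmetric the unique isomorphism produced on the conjugated input is $\psi \varphi \psi^{-1}$. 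Because the uniform distribution on orderings is $\Aut(T)$-invariant, this yields $\Pr_\str(\psi \varphi \psi^{-1}) = \Pr_\str(\varphi)$, which upon substituting $\psi \mapsto \psi^{-1}$ is exactly the conjugation-invariance demanded of an invariant automorphism sampler.
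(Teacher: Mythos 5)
Your proposal is correct and follows essentially the same route as the paper: individualize uniformly random vertices until the (colored) tournament becomes asymmetric, use the asymmetry oracle together with the color-removal and asymmetric-isomorphism reductions to identify the other vertices whose individualization yields an isomorphic asymmetric tournament, and output the unique isomorphism between the two individualized copies, which is a non-trivial automorphism. The only differences are cosmetic: you select the orbit-mate appearing earliest in a pre-drawn uniformly random ordering and prove invariance via equivariance of a deterministic map of that ordering, whereas the paper selects it uniformly at random from the computed set and argues that every step is either a uniform random choice or an automorphism-invariant computation.
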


\begin{algorithm}
	\caption{An invariant automorphism sampler for tournaments using an asymmetry oracle}\label{alg:iso:sampler:from:asym:oracle}
	
    \label{alg:algorithm-label}
    \begin{algorithmic}[1]
    \REQUIRE A tournament~$T$ that is not asymmetric and an oracle~$O$ for tournament asymmetry.
    	\ENSURE An automorphism~$\varphi\in \Aut(T)\setminus\{\id\}$. As a random variable, the outputs of the algorithm form an invariant automorphism sampler for~$T$.
    	\ENSUREGAP
         \STATE $T_{\mathrm{next}} \leftarrow T$
         \WHILE{$\Aut(T_{\mathrm{next}})\neq \{\id\}$} 
         \STATE{Pick a vertex~$v$ independently, uniformly at random among all non-singleton color classes in~$T_{\mathrm{next}}$.}
         \STATE $T\leftarrow T_{\mathrm{next}}$  
         \STATE{$T_{\mathrm{next}}\leftarrow T_{(v)}$}\COMMENT{individualize~$v$}
         \ENDWHILE
         \COMMENT{at this point~$T_{\mathrm{next}}$ is asymmetric}
         \STATE Let~$V'$ be the set of those vertices that have the same color in~$T$ as~$v$.
         \STATE Let~$V''$ be the set of those vertices~$v''$ in~$V'\setminus \{v\}$ for which~$\Aut(T_{(v'')})= \{\id\}$.
         \STATE Let~$V'''$ be the set of those vertices~$v'''$ in~$V''$ for which
          $T_{\mathrm{next}}\cong T_{(v''')}$. \\\COMMENT{use Part~\ref{item:asym:iso:to:asym} of Lemma~\ref{lem:asym:iso:red:to:asym} }
         \STATE Pick a vertex~$u\in V'''$ uniformly at random.
         \STATE Compute an isomorphism~$\varphi$ from~$T_{\mathrm{next}}$ to~$T_{(u)}$. \COMMENT {there is only one such isomorphism}
         \RETURN $\varphi$
    \end{algorithmic}
\end{algorithm}
\begin{proof}
Let~$O_1$ be an oracle for uncolored tournament asymmetry.
By Lemma~\ref{lem:col:tour:iso:to:tour:asym}, we can transform the oracle~$O_1$ for the asymmetry of uncolored  tournaments into an oracle~$O_2$ for asymmetry of colored tournaments. By Lemma~\ref{lem:asym:iso:red:to:asym} Part~\ref{item:asym:iso:to:asym}, we can also assume that we have an oracle~$O_3$ that decides the isomorphism problem of colored asymmetric tournaments.
More strongly, Lemma~\ref{item:asym:iso:to:asym:search} Part~\ref{item:asym:iso:to:asym} makes a remark on the search version, thus we can assume that~$O_3$ also solves the isomorphism search problem for asymmetric tournaments.

To obtain the desired sampler~$\str$ we proceed as follows. In the given tournament~$T$ we repeatedly fix (by individualization, i.e., giving it a special color) uniformly, independently at random more and more vertices until the resulting tournament is asymmetric. This gives us a sequence of colored tournaments~$T = T_0, T_1, \ldots, T_t$ such that~$\Aut(T_t)  = \{\id \}$,~$\Aut(T_{t-1})  \neq \{\id \}$  and such that~$T_t = ({T_{t-1}})_{(v)}$ for some vertex~$v$. In other words,~$T_t$ is obtained from~$T_{t-1}$ by individualizing~$v$ which makes the graph asymmetric.
Using the available oracle~$O_2$, we can compute the set~$V''$ of those vertices~$v''$ in~$V(T)\setminus\{v\}$ that have the same color as~$v$ such that~$\Aut((T_{t-1})_{(v'')})= \{\id\}$. There must be at least one vertex in~$V''$ since~$T_{t-1}$ is not asymmetric.
Using the oracle~$O_3$, we can then compute the subset~$V'''\subseteq V''$ of those vertices~$v'''$ for which~$(T_{t-1})_{(v''')}$ and~$T_t$ are isomorphic. 
Next, we pick a vertex~$u\in V'''$ uniformly at random. Since both~$(T_{t-1})_{(u)}$  and~$T_{t}$ are asymmetric, using the oracle~$O_3$ for the isomorphism search problem we can compute an isomorphism~$\varphi$ from~$(T_{t-1})_{(u)}$ to~$T_{t}$. This isomorphism~$\varphi$ is unique and it is a non-trivial automorphism of~$\Aut(T)$. Algorithm~\ref{alg:iso:sampler:from:asym:oracle} gives further details.

\emph{(Invariance)} The invariance follows directly from the fact that all steps of the algorithm either consist of choosing a vertex uniformly at random or computing an object that is invariant with respect to all automorphisms fixing all vertices that have been randomly chosen up to this point.

\emph{(Running time)} Concerning the running time, one call of Algorithm~\ref{alg:iso:sampler:from:asym:oracle} uses less than~$2n$ calls to oracle~$O_2$ and at most~$n$ calls to oracle~$O_3$. The overall running time is thus polynomial.
\end{proof}

Let us comment on whether the technique of the lemma can be applied to graph classes other than tournaments. For the technique to apply to a graph class~$\mathcal{C}$, we require the oracle~$O_2$, which solves colored asymmetry~$\mathcal{C}$, and the oracle~$O_3$ which solves the isomorphism search problem for asymmetric colored objects in~$\mathcal{C}$. (The oracle~$O_1$ is a special case of~$O_2$.)
In the case of tournaments, having an oracle~$O_1$ (i.e., an oracle for uncolored asymmetry) is sufficient to simulate the oracles~$O_2$ and~$O_3$, but this is not necessarily possible for all graph classes~$\mathcal{C}$. It is however possible to simulate such oracles for every graph class that satisfy some suitable (mild) assumptions, as can be seen from the discussion in Section~\ref{sec:gadget:constructs}. In particular, given an oracle for asymmetry of all graphs we can construct an invariant automorphism sampler for all graphs that are not asymmetric.

\section{Invariant suborbits from invariant automorphism samplers}\label{sec:sampling:minimimal:orbits}

Let~$G$ be a directed graph. 
Let~$\str$ be an invariant automorphism sampler for~$G$.
We now describe an algorithm that, given access to an asymmetry oracle, constructs a non-discrete partition of~$V(G)$ which is finer than or at least as fine as the orbit partition of~$G$ under~$\Aut(G)$ and invariant under~$\Aut(G)$. Here, a partition~$\pi$ is invariant under~$\Aut(G)$ if~$\pi = \psi(\pi)$ for all~$\psi \in \Aut(G)$. (A partition is discrete if it consists only of singletons.)

\begin{theorem}\label{thm:from:aut:sampler:to:suborbits}
For every~$c\in \mathbb{N}$, there is a randomized polynomial-time algorithm that, given a graph~$G$ and an invariant automorphism sampler~$\str$ for~$G$ constructs with error probability at most~$\frac{1}{|G|^c}$ a non-discrete partition~$\pi$ of~$V(G)$ such that
\begin{enumerate}
\item $\pi$ is finer than or at least as fine as the orbit partition of~$V(G)$ under~$\Aut(G)$ and
\item $\pi$ is invariant under~$\Aut(G)$.
\end{enumerate}
The algorithm also provides a set of certificates~$\Phi = \{\varphi_1,\ldots, \varphi_m\} \subseteq \Aut(G)$
such that for every pair of vertices~$v,v'\in V(G)$ that lie in the same class of~$\pi$ there is some~$\varphi_i$ with~$\varphi_i(v) = v'$. 
\end{theorem}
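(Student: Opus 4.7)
The plan is to bootstrap Theorem~\ref{thm:invariant:sampling} by applying it not to $\str$ itself, but to a derived sampler $\str'$ over ordered pairs of distinct vertices of $G$. I would define $\str'$ on the set $M:=V(G)^2\setminus\Delta$, where $\Delta$ is the diagonal, as follows: sample $\varphi\sim \str$, then sample a vertex $v$ uniformly at random from $\supp(\varphi):=\{w\in V(G):\varphi(w)\neq w\}$ (non-empty since $\varphi\neq\id$), and output the pair $(v,\varphi(v))$. Using the conjugation-invariance of $\str$ and the fact that $|\supp(\psi\varphi\psi^{-1})|=|\supp(\varphi)|$ for all $\psi\in\Aut(G)$, a short calculation would show that $\Pr_{\str'}(v,u)=\Pr_{\str'}(\psi(v),\psi(u))$ for every $\psi\in\Aut(G)$, so the diagonal action of $\Aut(G)$ on $M$ acts by probability-preserving bijections of $\str'$.

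Next I would apply Theorem~\ref{thm:invariant:sampling} to $\str'$ with error parameter $\varepsilon=|G|^{-c}$, which in polynomial time (note $|M|\le |V(G)|^2$) returns a non-empty characteristic subset $C\subseteq M$. Because any characteristic set is a union of level sets of the underlying probability function and the diagonal $\Aut(G)$-action stabilises each such level set setwise, $C$ would be closed under the diagonal $\Aut(G)$-action with failure probability at most $\varepsilon$. I would then form the undirected graph $H$ on $V(G)$ whose edge set is the symmetrisation of $C$ and take $\pi$ to be the partition of $V(G)$ into the connected components of $H$. Three verifications would follow: $\pi$ is $\Aut(G)$-invariant because $\Aut(G)$ permutes the edges of $H$; $\pi$ is at least as fine as the orbit partition because every $(v,u)\in C$ lies in $\supp(\str')$ and is therefore witnessed by an actually sampled automorphism $\varphi$ with $\varphi(v)=u$, putting $v$ and $u$ into a common $\Aut(G)$-orbit; and $\pi$ is non-discrete because $C$ is non-empty and off-diagonal, so $H$ contains at least one edge.

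For the certificates I would record, during the sampling phase, one sampled automorphism $\varphi_{v,u}\in\Aut(G)$ with $\varphi_{v,u}(v)=u$ for each $(v,u)\in C$. To meet the stronger requirement that every pair of distinct vertices $v,v'$ in a common cell of $\pi$ be witnessed by a single element of $\Phi$, I would fix within each cell a spanning tree of the induced subgraph of $H$ together with a root $r$, and for every vertex $w$ in the cell form the composition $\tau_w\in\Aut(G)$ of the recorded $\varphi_{\cdot,\cdot}$ (and their inverses) along the unique tree-path from $r$ to $w$, so that $\tau_w(r)=w$. The output would be $\Phi=\{\tau_{v'}\tau_v^{-1}:v,v'\text{ in a common cell of }\pi\}$, which has polynomial size and maps $v$ to $v'$ by the single element $\tau_{v'}\tau_v^{-1}$ whenever $v,v'$ share a cell. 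The main obstacle I anticipate is cleanly arguing the passage from ``$C$ is characteristic with respect to $\str'$'' to ``$C$ is invariant under the diagonal $\Aut(G)$-action'' together with a proper accounting of the failure probability; this reduces to the observation that probability-preserving bijections of $\str'$ automatically include the diagonal $\Aut(G)$-action, so the characteristic property of Theorem~\ref{thm:invariant:sampling} yields the required invariance essentially for free.
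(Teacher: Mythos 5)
Your proposal is correct and follows essentially the same route as the paper: derive a pair sampler $(v,\varphi(v))$ by sampling $\varphi$ and a uniform point of $\supp(\varphi)$, apply Theorem~\ref{thm:invariant:sampling} with $\varepsilon=|G|^{-c}$ to get a characteristic (hence $\Aut(G)$-invariant) pair set, take the induced closure/connected components as $\pi$, and compose stored sampled automorphisms for the certificates. The only differences (symmetrisation plus spanning trees instead of transitive closure, and taking $M$ to be all off-diagonal pairs rather than same-orbit pairs) are cosmetic and harmless.
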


\begin{proof}
Let~$M = \{(v,w)\mid v,w\in V(G), v\neq w, \exists \varphi\in \Aut(G)\colon  \varphi(v) = w \}$ be the set of pairs of two distinct vertices lying in the same orbit.
With the sampler~$\str$ we can simulate a sampler~$\str'$ over~$M$ invariant under~$\Aut(G)$ as follows. To create an element for~$\str'$ we sample an element~$\varphi$ from~$\str$ and uniformly at random choose an element~$v$ from the support~$\supp(\varphi) =\{x\in V(G)\mid \varphi(x)\neq x\}$ of~$\varphi$. Then the element for~$\str'$ is~$(v,\varphi(v))$.  It follows form the construction that~$\str'$ is a sampler for~$M$. Moreover, since all random choices are independent and uniform,~$\str'$ is invariant under automorphisms.

Using the algorithm from Theorem~\ref{thm:invariant:sampling} we can thus compute a characteristic subset~$M'$ of~$M$. Since~$\str'$ is $\Aut(G)$-invariant, the fact that~$M'$ is characteristic implies that it is also $\Aut(G)$-invariant. 
For the given~$c\in \mathbb{N}$, to obtain the right error bound, we choose~$\varepsilon$ to be~$\frac{1}{|G|^c}$ for the algorithm from Theorem~\ref{thm:invariant:sampling}. Then the error probability is at most~$\varepsilon= \frac{1}{|G|^c}$ and the running time is polynomial in~$|M| = O(|G|^2)$ and~$\ln |G|^c = O(|G|)$ and thus polynomial in the size of the graph.

Regarding~$M'$ as a binary relation on~$V(G)$ we compute the transitive closure and 
let~$\pi$ be the partition of~$V(G)$ into equivalence classes of said closure, where vertices that do not appear at all as entries in~$M'$ form their own class. By construction, elements that are in the same class of~$\pi$ are in the same orbit under~$\Aut(G)$. Moreover~$\pi$ is~$\Aut(G)$-invariant since~$M'$ is~$\Aut(G)$-invariant.

To provide certificates for the elements in~$M'$ we can store all elements given to us by~$S$. For each~$(v,w)\in M'$ we can thus compute an automorphism of~$\varphi_{v,w}\in \Aut(G)$ with~$\varphi_{v,w}(v) = w$. 
For pairs in the transitive closure of~$M'$ we then multiply suitable automorphisms.
\end{proof}

If a partition~$\pi$ satisfies the conclusion of the lemma, we call it an \emph{invariant collection of suborbits}. We call the elements of~$\Phi$ the \emph{certificates}. Let us caution the reader that the set~$\Phi$ returned by the algorithm is not necessarily characteristic. Moreover, the orbits of the elements in~$\Phi$ might not necessarily be contained within classes of~$\pi$. 
We call an algorithm an \emph{oracle for invariant suborbits} if, given a tournament~$T$, the algorithm returns a pair~$(\pi,\Phi)$ constituting invariant suborbits and certificates, in case~$T$ is not asymmetric, and returns the discrete partition~$\pi$ and~$\Phi =\{\id\}$ whenever~$T$ is asymmetric.

\section{Computing the automorphism group from invariant suborbits}\label{sec:auto:group:from:suborbits}

To exploit invariant suborbits we make use of the powerful group-theoretic technique to compute stabilizer subgroups.

\begin{theorem}[Luks~\cite{DBLP:journals/jcss/Luks82}]\label{thm:solvable:intersect}
There is an algorithm that, 
given a permutation group~$\Gamma$ on~$\{1,\ldots,n\}$ and subset~$B\subseteq   \{1,\ldots,n\}$, computes (generators for) the setwise stabilizer of~$B$. If~$\Gamma$ is solvable, then this algorithm runs in polynomial time.
\end{theorem}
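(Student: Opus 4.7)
The plan is to set up a standard group-theoretic divide and conquer on the action of $\Gamma$ on $\Omega=\{1,\ldots,n\}$, but to strengthen the inductive statement to a \emph{coset} version: given a coset $\Gamma g \subseteq \mathrm{Sym}(\Omega)$ and a set $B \subseteq \Omega$, compute the set $\{h \in \Gamma g \colon h(B) = B\}$, represented as a coset of a subgroup of $\Gamma$ (by generators) together with one coset representative, or reported as empty. Specializing to $g = \id$ recovers the setwise stabilizer. The reason the theorem is not proved directly is that the recursion crosses through coset problems, so phrasing the subroutine at this level is essential.

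I would then split into two cases on the action of $\Gamma$ on $\Omega$. If $\Gamma$ is \emph{intransitive} with orbits $\Omega_1, \ldots, \Omega_k$, then $h \in \Gamma g$ stabilizes $B$ iff it stabilizes every $B \cap \Omega_i$, so I would process the orbits one by one: first compute the sub-coset stabilizing $B \cap \Omega_1$, then within that sub-coset stabilize $B \cap \Omega_2$, and so on, yielding $k$ recursive calls, each on a strictly smaller (in the relevant sense) subproblem. If $\Gamma$ is \emph{transitive}, I would compute a minimal block system $\mathcal{B}$ for $\Gamma$ using standard orbit-on-pairs/block-finding procedures, let $N \trianglelefteq \Gamma$ be the kernel of the action on $\mathcal{B}$, and recurse: $N$ is intransitive on $\Omega$ (each $N$-orbit lies inside one block), so the intransitive case applies to $N$ with input $B$; to cover the rest of $\Gamma g$, enumerate coset representatives $g_1, \ldots, g_t$ of $N$ in $\Gamma g$ and solve the coset stabilizer problem for $N g_j$ and $B$ for each $j$, then take the union of the resulting cosets.

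The main obstacle is the running time. In general the transitive step branches by the factor $[\Gamma : N]$, which can be exponential and would make the recursion tree blow up. For \emph{solvable} $\Gamma$, however, solvability is inherited by $\Gamma/N$, so $\Gamma/N$ is a primitive solvable permutation group on $|\mathcal{B}|$ points. By the theorem of P\'alfy and Wolf, such a group has order bounded by a fixed polynomial in $|\mathcal{B}|$, so the coset enumeration contributes only polynomially many branches. Combined with the fact that each transitive recursive call passes to a strictly smaller domain (the blocks themselves, for generating coset representatives, and points inside one block for the $N$ recursion) and that the intransitive case strictly decreases the number of ``active'' orbits, an induction on $|\Omega|$ yields a polynomial total running time. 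Throughout, subgroups and cosets are manipulated via generating sets of polynomial size using Schreier--Sims style tools, so no step requires writing down the entire group.

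The hard conceptual point is precisely this polynomial bound for primitive solvable groups; without it, the framework above is already the blueprint of Luks's bounded-degree isomorphism algorithm and is straightforward, but it is the P\'alfy--Wolf bound that upgrades the method to cover arbitrary solvable $\Gamma$ in polynomial time.
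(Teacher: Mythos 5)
The paper itself contains no proof of this statement: it is imported verbatim from Luks's 1982 paper, so there is nothing internal to compare your argument against. Judged on its own, your sketch follows the standard route (orbit-by-orbit processing in the intransitive case, minimal block systems and the kernel of the block action in the transitive case, with the P\'alfy--Wolf bound on primitive solvable groups supplying the polynomial branching factor), and you have correctly identified where solvability is used.

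There is, however, a concrete flaw in how you set up the recursion. An element $h\in\Gamma g$ need \emph{not} map a $\Gamma$-orbit $\Omega_i$ to itself (in your own recursion the cosets that arise are $Ng_j$ with $N\trianglelefteq\Gamma$, and $g_j$ permutes the $N$-orbits nontrivially), so the claim ``$h(B)=B$ iff $h$ stabilizes every $B\cap\Omega_i$'' is false, and ``the sub-coset of $\Gamma g$ stabilizing $B\cap\Omega_1$'' is not the object the recursion needs. The correct strengthened subproblem is Luks's window (color-preservation) form: for a $\Gamma$-stable window $W$, compute $\{h\in \Gamma g:\ \forall x\in W,\ x\in B\Leftrightarrow h(x)\in B\}$, i.e.\ $h$ maps $B\cap W$ \emph{into} $B$ rather than onto $B\cap W$. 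Because $W$ is $\Gamma$-stable, this set is either empty or a coset of the subgroup $\{\gamma\in\Gamma:\ \forall x\in W,\ x\in B\Leftrightarrow\gamma(x)\in B\}$ (with the appropriate left/right coset convention), the constraints for different windows are independent so the sequential orbit-by-orbit processing is valid, and specializing $W=\{1,\ldots,n\}$, $g=\id$ recovers the setwise stabilizer. With that repair your transitive case goes through as stated, but the running-time claim should rest on the explicit recurrence rather than an informal ``induction on $|\Omega|$'': with window size $w$, $m$ blocks of size $w/m$, and at most $|\Gamma/N|\le m^{O(1)}$ cosets (P\'alfy--Wolf), one gets $T(w)\le m^{O(1)}\cdot m\cdot T(w/m)+\mathrm{poly}(n)$, whose unrolling is polynomial because the branching factor is polynomial in the same parameter $m$ by which the window shrinks.
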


We will apply the theorem in the following form: Let~$G$ be a graph and~$\Gamma$ a solvable permutation group on~$V(G)$. Then~$\Gamma \cap \Aut(G)$ can be computed in polynomial time. This follows directly from the theorem by considering the induced action of~$\Gamma$ on pairs of vertices from~$V(G)$ and noting that~$\Gamma \cap \Aut(G)$ consists of those elements that stabilize the edge set.

In our algorithm we will also use the concept of a quotient tournament (that can for example implicitly be found in~\cite{DBLP:journals/jcss/ArvindDM10}, see also~\cite{DBLP:conf/stacs/Schweitzer15}). Let~$T$ be a tournament and let~$\pi$ be a partition of~$V(T)$ in which all parts have odd size.
We define~$T/\pi$, the \emph{quotient of~$T$ modulo~$\pi$}, to be the tournament on~$\pi$ (i.e., the vertices of~$T/\pi$ are the parts of~$\pi$) where for distinct~$C,C'\in V(T/\pi) = \pi$ there is an edge from~$C$ to~$C'$ if and only if in~$T$ there are more edges going from~$C$ to~$C'$ than edges going from~$C'$ to~$C$. Note that since both~$|C|$ and~$|C'|$ are odd there are either more edges going from~$C$ to~$C'$ or more edges going from~$C'$ to~$C$. This implies that~$T/\pi$ is a tournament.
\begin{theorem}\label{thm:invariant:suborbits:give:tour:iso}
Suppose we are given as an oracle a randomized Las Vegas algorithm that computes invariant suborbits for tournaments in polynomial time. Then we can compute the automorphism group of tournaments in polynomial time.

\end{theorem}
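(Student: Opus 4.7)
The plan is to recurse on the measure $(|V(T)|,\sum_{C}(|C|-1))$ in lexicographic order, where the sum is over non-singleton color classes $C$ of $T$. Two preliminaries. First, the invariant-suborbit oracle subsumes an asymmetry oracle (it outputs the discrete partition iff $T$ is asymmetric), so via Lemma~\ref{lem:col:tour:iso:to:tour:asym} and Lemma~\ref{lem:asym:iso:red:to:asym} we have at our disposal oracles for colored asymmetry and for isomorphism search between asymmetric colored tournaments. Second, the automorphism group of any tournament has odd order---an involution would be forced to reverse the unique directed edge between some pair of swapped vertices---and is therefore solvable by Feit--Thompson. In particular every $\Aut(T)$-orbit has odd size, and so every block of any invariant suborbit partition $\pi$ has odd size, guaranteeing that $T/\pi$ is always a well-defined tournament and that $\Aut(T/\pi)$ and each $\Aut(T[B])$ are solvable.

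The algorithm \textsc{Compute-Aut}$(T)$ runs as follows. If $T$ is asymmetric, return $\{\id\}$; otherwise query the oracle to obtain $(\pi,\Phi)$. \emph{Case $|\pi|=1$}: the single block equals $V(T)$, so $\Aut(T)$ is transitive on $V(T)$ (witnessed by $\Phi$); pick $v\in V(T)$, recursively compute $H=\textsc{Compute-Aut}(T_{(v)})$ (the secondary measure strictly decreases), and return $\langle H\cup\Phi\rangle$, which equals $\Aut(T)$ by orbit--stabilizer. \emph{Case $|\pi|\geq 2$}: recursively compute $H=\textsc{Compute-Aut}(T/\pi)$ on strictly fewer vertices, and for each block $B$ recursively compute $A_B=\textsc{Compute-Aut}(T[B])$, also on strictly fewer vertices.

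The heart of the argument is to assemble from $H$ and the $A_B$ a solvable overgroup $\Gamma\leq\mathrm{Sym}(V(T))$ of $\Aut(T)$ and then to extract $\Aut(T)$ via Theorem~\ref{thm:solvable:intersect}. I take $\Gamma$ to be generated by the groups $A_B$ (each embedded in $\mathrm{Sym}(V(T))$ as the identity outside $B$) together with lifts $\tilde h\in\mathrm{Sym}(V(T))$ of the generators of $H$, each lift chosen so that its restriction to every block $B$ is an honest tournament isomorphism $T[B]\to T[h(B)]$. Granted this choice, every $\sigma\in\Aut(T)$ whose block projection is a word $h_{i_1}\cdots h_{i_k}\in H$ differs from $\tilde h_{i_1}\cdots \tilde h_{i_k}$ by a permutation that preserves each block setwise and, on each block, restricts to a $T$-automorphism; the correction therefore lies in $\prod A_B$, and so $\sigma\in\Gamma$. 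The group $\Gamma$ is solvable as an iterated extension of the solvable groups $H$ and $\prod A_B$, and applying Theorem~\ref{thm:solvable:intersect} to the induced action of $\Gamma$ on $V(T)\times V(T)$ yields $\Aut(T)$ as the setwise stabilizer of $E(T)$.

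The main obstacle is producing the block-isomorphism lifts $\tilde h$: for each generator $h$ of $H$ and each block $B$, one needs an explicit isomorphism $T[B]\to T[h(B)]$. If $T[B]$ is asymmetric this is delivered directly by Lemma~\ref{lem:asym:iso:red:to:asym}. Otherwise I individualize further vertices in $B$ (taking representatives of $A_B$-orbits to prune the branching) until asymmetry is reached, trying candidate partner vertices in $h(B)$ at each step; each individualization strictly decreases the secondary component of the measure on a tournament that already has fewer vertices than $T$, so the procedure is well-founded within the outer recursion. All the component computations---recursive Aut calls, isomorphism searches, the construction of $\Gamma$, and the Luks intersection---run in polynomial time, the recursion has polynomial depth, and therefore the total running time is polynomial.
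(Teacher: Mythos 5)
Your overall architecture is the same as the paper's: handle the transitive case via $\Phi$ plus a point stabilizer, otherwise recurse on the quotient $T/\pi$ and on the blocks, lift quotient generators using block isomorphisms, and extract $\Aut(T)$ from a solvable overgroup via Theorem~\ref{thm:solvable:intersect}. However, there are two genuine gaps.

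First, your lifts $\tilde h$ need not exist. The classes of an invariant suborbit partition are only guaranteed to be isomorphic (indeed, only guaranteed to have equal size) when they lie in the same $\Aut(T)$-orbit; across different orbits the induced subtournaments $T[B]$ can be pairwise non-isomorphic and of different sizes. An automorphism $h$ of the quotient $T/\pi$ is under no obligation to respect block isomorphism types ($\Aut(T/\pi)$ is generally much larger than the group induced by $\Aut(T)$), so for a generator $h$ of $H$ there may be no isomorphism $T[B]\to T[h(B)]$ at all. Your containment argument $\Aut(T)\le\Gamma$ writes the quotient projection of $\sigma\in\Aut(T)$ as a word in the generators $h_i$ and then lifts each letter, so it collapses exactly at this point; restricting to the liftable generators does not obviously generate the subgroup of $H$ that is relevant. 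The paper avoids this by two preliminary invariant recolorings before any lifting is attempted: if block sizes differ (its Case~2) or if some pair of blocks is non-isomorphic (its Case~3a), it recolors $T$ accordingly and recurses, and only when all blocks are pairwise isomorphic (Case~3b) does it form the lifts, which then exist for \emph{every} element of $\Aut(T/\pi)$. Your proposal needs this case analysis (or an equivalent device) to be correct.

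Second, your method for producing an isomorphism $T[B]\to T[h(B)]$ when $T[B]$ is not asymmetric --- backtracking individualization, trying candidate partner vertices in $h(B)$ at each level --- has no polynomial-time guarantee. Well-foundedness of your lexicographic measure only shows termination; the branching over target candidates at each of possibly many individualization levels can blow up the recursion tree superpolynomially, and pruning by $A_B$-orbit representatives controls only the source side. Likewise, ``the recursion has polynomial depth, therefore the total running time is polynomial'' is a non sequitur: what must be bounded is the number of nodes of the recursion tree. The paper handles both issues at once: it exploits that each block $T[C]$ is transitive (witnessed by the certificates) to individualize a single vertex per block, reduces each pairwise block-isomorphism test to one recursive automorphism computation on the gadget $\mathrm{Tri}(T_C,T_{C'})$, and then proves polynomiality by an explicit recurrence $R(t,n)$ that accounts for the $(n/t')^2$ gadget calls of size $3t'$. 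Some argument of this kind is missing from your write-up.
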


\begin{algorithm}
	\caption{Computing the automorphism of a tournament using invariant suborbits}\label{alg:aut:group}
	
    \label{alg:auto:group}
    \begin{algorithmic}[1]
    \REQUIRE A (colored) tournament~$T$ and an oracle~$O$ for invariant suborbits with certificates.
    	
    \ENSURE A generating set for the automorphism group~$\Aut(T)$. 
    \ENSUREGAP
    \IF [Case 0]{$T$ is not monochromatic} \label{monocrome:case}
    \STATE Let $\mathrm{Col}$ be the set of vertex colors of~$T$.
    \FOR{$c\in COL$}
    \STATE Let $V^c$ be the set vertices in~$T$ of color~$c$.
    \STATE $\Psi^c\leftarrow  \Aut(T[V^c])$   \COMMENT{recursion}\label{item:case:0:recursion}
    \STATE Let $\widehat{\Psi^c}$ be the set of extensions of $\Psi^c$ to~$V(T)$ obtained by fixing vertices outside~$V^c$.
    \ENDFOR

    \STATE $\Psi = \bigcup_{c\in \mathrm{Col}} \widehat{\Psi^c}$
    \RETURN $\langle \Psi\rangle\cap \Aut(T)$ \COMMENT {solvable group stabilizer}\label{item:case:0:stab}
     \ENDIF   
    \STATE $(\pi,\Phi)\leftarrow O(T)$ \COMMENT {$\pi$ forms invariant suborbits of~$T$, $\Phi$ the set of certificates}

    \IF [$T$ is asymmetric]{$\pi$ is discrete}
    \RETURN $\{\id\}$
    \ELSIF [Case 1]{$\pi =  \{V(T)\}$}
    \STATE Choose $v\in V(T)$ arbitrarily.
    \STATE Let~$T'$ be obtained from~$T$ by coloring~$v$ with~1, all in-neighbors of~$v$ with 2 and other vertices with~$3$.
    \RETURN $\Phi \cup \Aut(V(T'))$  \COMMENT{recursion}\label{item:case:1:recursion}
    
    \ELSIF [Case 2]{$\exists C,C'\in \pi\colon |C|\neq |C'|$}
    \STATE Let~$T'$ be obtained from~$T$ by coloring each vertex~$v$ with color~$|[v]_{\pi}|$.
    \RETURN $\Aut(V(T'))$ \COMMENT{recursion}\label{item:case:2:recursion}
    
    \ELSE [Case 3]
        \STATE For~$C\in \pi$ we let~$T_C$ be the graph obtained from~$T[C]$ by picking an arbitrary vertex~$v\in C$ and coloring~$v$ with~1, all in-neighbors of~$v$ with 2 and other vertices with~$3$.\label{item:individ:tri}
        \FOR {$\{(C,C')\in \pi\mid C\neq C'\}$}
	        \STATE Compute~$\Aut (\mathrm{Tri}(T_C,T_C'))$ and extract an isomorphism $\varphi_{(C,C')}\colon T[C]\rightarrow T[C']$ whenever such an isomorphism exists.\COMMENT{recursion}\label{line:iso:call:via:aut}	   
        \ENDFOR
        \IF [Case 3a] {$\exists C,C'\in \pi\colon T[C] \ncong T[C']$}
	         \STATE Let~$T'$ be obtained from~$T$ by coloring~$V(T)$ so that~$v$ and~$v'$ have the same color if and only if~$T[([v])] \cong T[([v'])]$.
	         \RETURN $ \Aut(T')$ \COMMENT{recursion}\label{item:case:3a:recursion}
        \ELSE [Case 3b]
        \STATE $\Psi \leftarrow \Aut(T/\pi)$ \COMMENT{recursion on the quotient}\label{item:quotient}
        \STATE $\widehat{\Psi}\leftarrow \{\widehat{g} \mid g\in \Psi\}$, where~$\widehat{g}(v) =\varphi_{([v],g([v]))}(v)$.
        \FOR {$\{C \in \pi \}$}
        \STATE $\Upsilon_C \leftarrow \Aut(T[C])$ \COMMENT{recursion}\label{item:auts:of:parts}
        \STATE Compute~$\widehat{\Upsilon}_C$ the lifts of elements in~$\Upsilon_C$ by fixing vertices outside~$C$. 
        \ENDFOR
        \RETURN $\langle \widehat{\Psi}\cup  \bigcup_{C\in \pi} \widehat{\Upsilon}_C\rangle \cap \Aut(T)$   \COMMENT {solvable group stabilizer}\label{item:case:3b:stab}
       
        \ENDIF
        
    \ENDIF
    \end{algorithmic}
\end{algorithm}

\begin{proof}
We describe an algorithm that computes the automorphism group of a colored tournament given a randomized oracle that provides invariant suborbits.

\emph{(Description of the algorithm)}  Let~$T$ be a given colored tournament. 

(Case 0: $T$ is not monochromatic.) If~$T$ is not monochromatic then we proceed as follows:

Let~$\mathrm{Col}$ be the set of colors that appear in~$T$.
For~$c\in \mathrm{Col}$, let~$V^c$ be the set of vertices of color~$c$ and let~$T^c = T[V^c]$ be the subtournament induced by the vertices in~$V^c$.

We recursively compute~$\Aut(T^c)$ for all~$c\in \mathrm{Col}$. Let~$\Psi^c$ be the set of generators obtained as an answer. We lift every generator to a permutation of~$V(T)$ by fixing all vertices outside of~$V^c$. Let~$\widehat{\Psi^c}$ be the set of lifted generators of~$\Psi^c$ and let~$\Psi = \bigcup_{c\in \mathrm{Col}} \widehat{\Psi^c}$ be the set of all lifted generators.
Since~$\Aut(T^c) =\langle \Psi^c\rangle$ is solvable, we conclude that~$\langle \Psi\rangle$ is a direct product of solvable groups and thus solvable. We can thus compute~$\langle \Psi\rangle\cap \Aut(T)$ using Theorem~\ref{thm:solvable:intersect} and return the answer.

\medskip

This concludes Case 0. In every other case we first compute 
 a partition~$\pi$ into suborbits using the oracle and a corresponding set of certificates~$\Phi$.
For a partition~$\pi$ of some set~$V$ we denote for~$v\in V$ by~$[v]_{\pi}$ the element of~$\pi$ containing~$v$. We may drop the index when it is obvious from the context.
If~$|T|=1$ then we simply return the identity.

(Case 1: $\pi$ is trivial). In case~$\pi$ is trivial (i.e.,~$\pi=\{V(T)\}$), we know that~$T$ is transitive. We choose an arbitrary vertex~$v\in V(T)$. 
Let~$\lambda$ be the coloring of~$V(T)$ satisfying
\[\lambda(u) = \begin{cases}
1 & \text {if } u=v\\
2 & \text {if } (u,v)\in E(T)\\
3 & \text{otherwise}.
\end{cases}\]
We recursively compute a generating set~$\Psi$ for~$\Aut(T')$, where~$T'$ is~$T$ recolored with~$\lambda$. We then return~$\Psi \cup \Phi$. 

(Case 2: not all classes of $\pi$ have the same size.)

We color every vertex with the size of the class of~$\pi$ in which it is contained. Now~$T$ is not monochromatic anymore and we recursively compute~$\Aut(T)$ with~$T$ having said coloring. (In other words,
we proceed as in Case~$0$.)

(Case 3: all classes of $\pi$ have the same size but~$\pi$ is non-trivial.)

We compute for each pair of distinct equivalence classes~$C$ and~$C'$ of~$\pi$ an isomorphism~$\varphi_{(C,C')}$ from~$T[C]$ to~$T[C']$ or determine that no such isomorphism exists, as follows: We choose for each~$C$ an arbitrary vertex~$v\in C$.
We let~$T_C$ be the tournament obtained from~$T[C]$ by
coloring~$v$ with~1, all in-neighbors of~$v$ with 2 and other vertices with~$3$. We let~$T_{C,C'} = \mathrm{Tri}(T_C,T_{C'})$ be the triangle tournament of~$T_C$ and~$T_{C'}$ where~$(T_{C})'$ is an isomorphic copy of~$T_C$ (as defined in Section~\ref{sec:gadget:constructs} in the proof of Lemma~\ref{lem:asym:iso:red:to:asym}).

Using recursion we compute~$\Aut(T_{C,C'})$. From the result we can extract an isomorphism from~$T[C]$ to~$T[C']$ since~$V(T[C])$ and~$V(T[C'])$ are blocks of~$T_{C,C'}$.

(Case 3a:) If it is not the case that for every pair~$C,C'$ of color classes there is an isomorphism from~$T[C]$ to~$T[C']$ then we color the vertices of~$T$ so that~$v,v'$ have the same color if and only if there is an isomorphism from~$T[([v])]$ to~$T[([v'])]$, where as before for every vertex~$u$ we denote by~$[u]$ the class of~$\pi$ containing~$u$.
With this coloring,~$T$ is not monochromatic anymore and we recursively compute~$\Aut(T)$ with~$T$ having said coloring. (In other words,
we proceed as in Case~$0$.)

(Case 3b:) Otherwise, for every pair~$C,C'$ of color classes, there is an isomorphism from~$T[C]$ to~$T[C']$. Note that all color classes are of odd size since~$T[C]$ is transitive (as dictated by~$\pi$).
Thus, we can compute the quotient tournament~$T/\pi$. We recursively compute a generating set~$\Psi = \{ g_1,\ldots,g_t \}$ for the automorphism group of~$T/\pi$.

We lift each~$g_i$ to a permutation~$\widehat{g_i}$ of~$V(T)$ as follows.
The permutation~$\widehat{g_i}$ maps each vertex~$v$ to~$\varphi_{([v],g_i([v]))}(v)$.  Since~$g_i$ is a permutation and each~$\varphi_{(C,C')}$ is a bijection, the map~$\widehat{g_i}$ is a permutation of~$V(T)$. Let~$\widehat{\Psi}= \{ \widehat{g_1},\ldots,\widehat{g_t} \}$ be the set of lifted generators.

As next step, for each class~$C$ we recursively compute a generating set~$\Upsilon_C$ for~$\Aut(T[C])$. We lift each generator in~$\Upsilon_C$ to a permutation of~$V(T)$ by fixing all vertices outside of~$C$ obtaining the set~$\widehat{\Upsilon}_C$ of lifted generators.

Consider the group~$\Gamma$ generated by the set~$\widehat{\Psi} \cup \bigcup_{C\in \pi} \widehat{\Upsilon}_C$. 
 As a last step, using Theorem~\ref{thm:solvable:intersect} we compute the subgroup~$\Gamma' =\Gamma \cap \Aut(T)$.

The details of this algorithm are given in Algorithm~\ref{alg:aut:group}.

\medskip

\emph{(Running time)} 
We first argue that all work performed by an iteration of the algorithm apart from the recursive calls is polynomial in~$n$, say~$O(n^c)$ for some constant~$c$. This is obvious for all instructions of the algorithm except the task to compute the intersection of $\langle \Psi\rangle\cap \Aut(T)$ in Case 0 (Line~\ref{item:case:0:stab}) and the task to compute $\langle \widehat{\Psi}\rangle \cap \Aut(T)$  in Case 3b (Line~\ref{item:case:3b:stab}).
However, in Case 0, the group generated by $\langle \Psi\rangle$ is a direct product of solvable groups, thus solvable, and in Case 3b, the group $\langle \widehat{\Psi}\cup  \bigcup_{C\in \pi} \widehat{\Upsilon}_C\rangle$ is a subgroup of a wreath product of a solvable group with a solvable group and is thus solvable. (Alternatively we can observe that the natural homomorphism from the group $\langle \widehat{\Psi}\cup  \bigcup_{C\in \pi} \widehat{\Upsilon}_C\rangle$ to~${\Psi}$ has kernel~$\langle\bigcup_{C\in \pi} \widehat{\Upsilon}_C\rangle$, a direct product of solvable groups.) In either case, using the algorithm from Theorem~\ref{thm:invariant:suborbits:give:tour:iso}, the group intersection can be computed in polynomial time.

It remains to consider the number of recursive calls.
We will bound the amount of work of the algorithm in terms of~$t$, the maximum size of a color class of~$T$, and the number of vertices~$n$. Denote by~$R(t,n)$ the maximum number of nodes in the recursion tree over all tournaments for which the color classes have size at most~$t$ and  the number of vertices is at most~$n$. Note that~$R(t,n)$ is monotone increasing in both components. 

First note that if~$t<n$ the algorithm will end up in Case~0. 
The recursive bound in Case~0 (Line~\ref{item:case:0:recursion}) is then~$R(t,n) \leq 1+  \sum_{i = 1}^\ell R(a_i,a_i)$ for some positive integers~$a_1,\ldots,a_{\ell}\in \mathbb{N}$ (the color class sizes)  that sum up to~$n$ but are smaller than~$n$.

In Case~1, we have~$t=n$. The tournament~$T'$ is colored into three color classes. 
Since~$T$ is transitive (and thus every vertex has in- and out-degree~$(t-1)/2$), in~$T'$ there is one color class of size~$1$ and there are two classes of size~$(t-1)/2$.
The recursive call will lead to Case~0, which then
yields one trivial recursive call on a tournament of size~1 and two calls with tournaments of size~$(t-1)/2$.
We obtain a recursive bound (for Line~\ref{item:case:1:recursion}) of~$R(t,n)\leq  2+ 2 R((t-1)/2,(t-1)/2)\leq  3 R(t/2,t/2)$.

In Case 2, we have~$t=n$ and observe that the recursive call is for a tournament that is not monochromatic. Thus the recursive call will end up in Case 0. We thus obtain a recursive bound (for Line~\ref{item:case:2:recursion}) of~$R(t,n) \leq  2+ \sum_{i = 1}^\ell R(a_i,a_i)$ for some positive integers~$a_1,\ldots,a_{\ell}\in \mathbb{N}$ that sum up to~$n$ but are smaller than~$n$.

In Case 3, we have~$t=n$. Note that if the classes of~$\pi$ have size~$t'$ then~$t'\leq n/3$ (elements of~$\pi$ are all equally large and there are at least 2 but there is an odd number) and there are~$(n/t')^2=(t/t')^2$ recursive calls in Line~\ref{line:iso:call:via:aut}.
In the graph~$T_{C,C'}$ the color classes have size at most~$3 (t'-1)/2$ and there are at most~$3t'$ vertices. (The increase of a factor 3 comes from the~$\mathrm{Tri}()$ operation.) 
Thus the cost for such calls is bounded by~$ (t/t')^2 \cdot R(3(t'-1)/2,3t')\leq (t/t')^2 \cdot R(3t'/2,3t') \eqqcolon R_3 $, where~$3t'/2\leq n/2 = t/2$.

Using the same arguments as before, in Case 3a we thus get a recursive bound for Line~\ref{item:case:3a:recursion} of~$\sum_{i = 1}^\ell R(b_i,b_i)$ and thus for Case 3a in total a bound of~$R(t,n) \leq 1 + R_3+ \sum_{i = 1}^\ell R(b_i,b_i)$ for some positive integers~$b_1,\ldots,b_{\ell}\in \mathbb{N}$ that sum up to~$n$ but are smaller than~$n-t'= t-t'$.

In Case 3b we need to additionally consider the cost for the recursive call in Line~\ref{item:quotient}. This cost is at most~$ R(t/t',t/t')$ where~$t'\geq 2$ since the coloring is not discrete.
Also there is a recursive cost of~$t/t'\cdot  R(t',t') $ coming from Line~\ref{item:auts:of:parts}. 
Thus in this case we end up with~$R(t,n) \leq 1 + R(t/t',t/t') + R_3 + t/t'\cdot  R(t',t')$.

Summarizing we get that~$R(t,n)$  is bounded by
\[  \begin{cases}
1 & \text{if $n=1$}\\
2+  \sum\limits_{i = 1}^\ell R(a_i,a_i) & \text{in Cases 0 and 2, with~$\sum\limits_{i=1}^{\ell}a_i = n$ and $a_i\leq n-1$}\\
3 R(t/2,t/2)& \text{in Case 1}\\
1 + R_3+ \sum\limits_{i = 1}^\ell R(b_i,b_i)& \text{in Case 3a, with~$\sum\limits_{i=1}^{\ell}b_i = n$ and~$b_i\leq t-t'$}\\
1 + R(t/t',t/t') + R_3 + t/t'\cdot  R(t',t') & \text{in Case 3b,}
\end{cases}\]

where~$ R_3 = (t/t')^2 \cdot R(3t'/2,3t')$ and~$t'$ satisfies~$3t'/2 \leq t/2$ and~$t/ t'\leq t/2$ and~$3t'\leq t$.
Let us define~$S(m)$ as the maximum of~$R(t,n)$ over all pairs of positive integers~$(t,n)$ with~$t+n\leq m$ and~$t\leq n$. Then we get from the above considerations that~$S(m)$ is bounded by one of the following

\[  \begin{cases}
1 & \text{if $m=2$}\\
2+  \sum\limits_{i = 1}^\ell S(a_i) & \text{$\sum\limits_{i=1}^{\ell}a_i \leq m$ and $a_i\leq m-1$}\\
3 S(m/2)& \\
1 + (m/t')^2 \cdot S(9/2 \cdot t')+ \sum\limits_{i = 1}^\ell S(b_i)& \sum\limits_{i=1}^{\ell}b_i = m$ and~$b_i\leq m-t'\\
1 + S(m/t') + (m/t')^2 \cdot S(9/2 \cdot t')+ m/t'\cdot  S(t'), & 
\end{cases}\]
where~$t'$ satisfies~$9/2 \cdot t'\leq 3/4\cdot m$ and~$t/ t'\leq m/2$.
It is now simply a calculation to show that for~$d$ sufficiently large, the function~$F(m)= m^d$ satisfies all the recurrence bounds for~$S$ (of course as lower bounds rather than upper bounds).
We show the calculation for the most interesting case, the Case 3a. 
Let~$x = m-t'$. Then~$5t'\leq x$. Furthermore the equation for Case 3a says~$S(x+t')\leq 1+ ((x+t')/t')^2 S(9/2 t') + \sum\limits_{i = 1}^\ell S(b_i)$ where~$b_i\leq x$ and~$\sum\limits_{i=1}^{\ell}b_i = x+t'$. Note for the function~$F$ that~$\sum\limits_{i = 1}^\ell F(b_i)$, under the conditions~$b_i\leq x$ and~$\sum\limits_{i=1}^{\ell}b_i = x+t'$, gets maximized as~$x^d + (t')^d$.
For the right hand side we get~$1+ ((x+t')/t')^2  (9/2)^d (t')^d + x^d + (t')^d \leq 1+ x^d+ (6/5)^2 (9/2)^d x^2  (t')^{d-2} +  (t')^d$ which is certainly bounded by~$(x+t')^d$ for~$d$ sufficiently large since the expansion of~$(x+t')^d$ contains the summands~$x^d$,~$(t')^d$ and~$d x^{d-1} t'\leq d 5^{d-3} x^{2} (t')^{d-2}$. Thus~$F$ is an upper bound for~$S$.

Overall we obtain a polynomial-time algorithm from this recursive bound. This in particular implies that the algorithm halts.

\medskip

\emph{(Correctness)} 
For the correctness proof we analyze the different cases one by one. By induction we can assume that recursive calls yields correct answers.

For Case~0, since the last instruction intersects some group with the automorphism group it is clear that the algorithm can only return automorphisms of~$T$. Let us thus assume that~$\varphi\in \Aut(T)$. Then, for each color class~$c$, the set~$V_c$ is invariant under~$\varphi$ and~$\varphi|_{V_c}\in  \Aut(T[V^c])$. This implies that~$\varphi \in \langle \Psi\rangle$.

For Case 1,~$T$ is transitive since~$\pi = V(T)$ shows that~$V(T)$ is an orbit. Thus,~$\Aut(T)$ is generated by the point stabilizer~$\Aut(T)_v \coloneqq \{\psi \in \Aut(T)\mid \psi(v) = v\}$  and an arbitrary transversal (i.e., a subset of elements of~$\Aut$ containing a representative from each coset of~$\Aut(T)_v$ in~$\Aut(T)$). Since~$\Phi$ contains a certificate for all pairs of distinct vertices~$(v,v')$ and since~$\Phi\subseteq \Aut(T)$ we conclude that~$\Aut(T) = \langle \Phi\cup \Aut(V(T'))\rangle$.

For Case 2, it suffices to note that for every integer~$i\in \mathbb{N}$ the set $\{v \in V(T) \mid |[v]_\pi| = i \}$ is invariant under~$\Aut(T)$. 

For Case 3, Line~\ref{line:iso:call:via:aut} note that similar to Case 1, the graphs~$T[C]$ and~$T[C']$ are transitive and thus the individualization in Line~\ref{item:individ:tri} does not make isomorphic graphs non-isomorphic.

For Case 3a, again note that for~$v\in V(T)$ the set~$\{v' \in V(T) \mid T[([v])] \cong T[([v'])] \}$ is invariant. For Case 3b we argue similarly to Case~0. Since the last  instruction intersects some group with the automorphism group it is clear that the algorithm can only return automorphisms of~$T$. Let us thus assume that~$\varphi\in \Aut(T)$.

Then~$\varphi$ induces an automorphism~$\psi$ of~$T/\pi$. Note that there is some~$\widehat{\psi}$ in~$\langle\widehat{\Psi}\rangle$ that also induces~$\psi$ on~$T/\pi$. It suffices now to show that the map~$\widehat{\psi}^{-1}\circ \varphi$ is in~$\langle\bigcup_{C\in \pi} \widehat{\Upsilon}_C\rangle$. Consider~$C\in \pi$. Then~$\widehat{\psi}^{-1}\circ \varphi$ maps~$C$ to~$C$ and more strongly it induces an automorphism of~$T[C]$ which must be contained in~$\langle\widehat{\Upsilon}_C\rangle$. We conclude that~$\widehat{\psi}^{-1}\circ \varphi$ is in~$\langle\bigcup_{C\in \pi} \widehat{\Upsilon}_C\rangle$ finishing the proof.
 \end{proof}
We have now assembled all the required parts to prove the main theorem of the paper.

\begin{corollary}
\begin{enumerate}
\item There is a randomized (one-sided error) polynomial-time Turing reduction from tournament isomorphism to asymmetry testing of tournaments (i.e.,~$\GIlang_\mathcal{\Tourlang} \leq_{r,T}^p \GAlang_\mathcal{\Tourlang}$).
\item There is a randomized polynomial-time Turing reduction from the computational task to compute generators of the automorphism group of a tournament to asymmetry testing of tournaments (i.e.,~$\AUTlang_\mathcal{\Tourlang} \leq_{r,T}^p \GAlang_\mathcal{\Tourlang}$).
\end{enumerate}

\begin{proof}
Recall that a two-sided error algorithm for an isomorphism search problem can be readily turned into a one-sided error algorithm by checking the output isomorphism for correctness. Thus, by Lemma~\ref{lem:asym:iso:red:to:asym} Part~\ref{item:gi:to:aut}
it suffices to prove the second part of the corollary.

Combining Lemma~\ref{lem:from:tour:asym:to:aut:sampler} and Theorem~\ref{thm:from:aut:sampler:to:suborbits}, from an oracle to tournament asymmetry we obtain a randomized Monte Carlo (i.e., with possible errors) algorithm that computes invariant suborbits. Given a Las Vegas algorithm (i.e., no errors) for suborbits, the previous theorem provides us with a computation of the automorphism group of tournaments. 

It remains to discuss the error probability we get from using a Monte Carlo algorithm instead of a Las Vegas algorithm. Since there is only a polynomial number of oracle calls, and since the error bound in Theorem~\ref{thm:from:aut:sampler:to:suborbits} can be chosen smaller than~$\frac{1}{|G|^c}$ for every fixed constant~$c$, the overall error can be chosen to be arbitrarily small.
\end{proof}

\end{corollary}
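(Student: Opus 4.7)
The plan is to simply assemble the pieces already developed in the paper; the corollary is essentially a bookkeeping exercise once Lemmas~\ref{lem:asym:iso:red:to:asym}, \ref{lem:from:tour:asym:to:aut:sampler} and Theorems~\ref{thm:from:aut:sampler:to:suborbits}, \ref{thm:invariant:suborbits:give:tour:iso} are in hand.

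First I would reduce the first statement to the second one. Given a randomized algorithm that, using an oracle for $\GAlang_\mathcal{\Tourlang}$, computes generators of $\Aut(T)$, Lemma~\ref{lem:asym:iso:red:to:asym} Part~\ref{item:gi:to:aut} already gives a deterministic Turing reduction from $\cGIlang_\mathcal{\Tourlang}$ (and hence from $\GIlang_\mathcal{\Tourlang}$) to $\AUTlang_\mathcal{\Tourlang}$. Chaining these yields a randomized reduction from tournament isomorphism to tournament asymmetry. To make the error one-sided, I would observe that the chain outputs a candidate isomorphism (via Lemma~\ref{lem:asym:iso:red:to:asym}); since isomorphism is a polynomial-time checkable witness, we verify and either certify YES or report ``don't know'', never mistakenly certifying a NO-instance as YES.

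Next, for the second (and main) statement, the plan is to build the following pipeline starting from an oracle $O$ for $\GAlang_\mathcal{\Tourlang}$. Step one: apply Lemma~\ref{lem:from:tour:asym:to:aut:sampler} to $O$ to obtain, for any non-asymmetric colored tournament $T$, an invariant automorphism sampler $\str$ running in polynomial time per sample. Step two: feed $\str$ into the algorithm from Theorem~\ref{thm:from:aut:sampler:to:suborbits} with error parameter $\varepsilon := 1/|V(T)|^c$ for a large constant $c$ to be fixed below; this yields, with failure probability at most $1/|V(T)|^c$, a pair $(\pi,\Phi)$ of invariant suborbits with certificates. Together these two steps constitute a Monte Carlo oracle for invariant suborbits. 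Step three: plug this Monte Carlo oracle into Algorithm~\ref{alg:aut:group} as the ``oracle'' $O$ in Theorem~\ref{thm:invariant:suborbits:give:tour:iso}.

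The one genuine subtlety is that Theorem~\ref{thm:invariant:suborbits:give:tour:iso} formally assumes a Las Vegas oracle, whereas we only have a Monte Carlo one; this is the main obstacle I would need to address. The way to handle it is quantitative: the running time analysis in the proof of Theorem~\ref{thm:invariant:suborbits:give:tour:iso} shows that the recursion tree has polynomial size, say bounded by $n^d$ for some constant $d$ depending only on the algorithm. Hence the total number of suborbit oracle calls, and thus the number of invocations of the Monte Carlo procedure, is at most $n^d$. Choosing $c > d+1$ in the preceding paragraph and applying the union bound, the probability that any single invocation returns an incorrect $(\pi,\Phi)$ is at most $n^d \cdot n^{-c} \leq n^{-1}$, which can be driven below any prescribed constant (or made polynomially small by increasing $c$). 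Since each individual recursive call to the suborbit oracle runs in expected polynomial time and the error bound is made polynomially small, the resulting algorithm is a randomized polynomial-time Turing reduction from $\AUTlang_\mathcal{\Tourlang}$ to $\GAlang_\mathcal{\Tourlang}$, completing the proof.
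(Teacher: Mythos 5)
Your proposal is correct and follows essentially the same route as the paper: reduce the first statement to the second via Lemma~\ref{lem:asym:iso:red:to:asym} Part~\ref{item:gi:to:aut} with verification of the output isomorphism for one-sided error, build a Monte Carlo suborbit oracle from Lemma~\ref{lem:from:tour:asym:to:aut:sampler} and Theorem~\ref{thm:from:aut:sampler:to:suborbits}, and feed it into Theorem~\ref{thm:invariant:suborbits:give:tour:iso}, handling the Monte Carlo versus Las Vegas issue by a union bound over the polynomially many oracle calls with error parameter $1/|G|^c$ for $c$ large. Your version merely makes the union-bound bookkeeping slightly more explicit than the paper does.
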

\section{Discussion and open problems}\label{sec:open:prob}

This paper is concerned with the relationship between the asymmetry problem~$\GAlang_\mathcal{\mathcal{C}}$ and isomorphism problem~$\GIlang_\mathcal{\mathcal{C}}$. While under mild assumptions there is a reduction from the former to the latter, a reduction in the other direction is usually not known. However, for tournaments we now have such a randomized reduction.

The first question that comes to mind is whether the technique described in this paper applies to other graph classes. While the sampling techniques from Sections~\ref{sec:sampling:subsets} to~\ref{sec:sampling:minimimal:orbits} can be applied to all graph classes that satisfy mild assumptions (e.g.,~$\cGIlang_\mathcal{\mathcal{C}} \leq_t^p \GIlang_\mathcal{\mathcal{C}}$ and~$\cGIlang_\mathcal{\Asymlang\mathcal{C}} \leq_t^p \GIlang_\mathcal{\Asymlang\mathcal{C}}$) the algorithm described in Section~\ref{sec:auto:group:from:suborbits} crucially uses the fact that automorphism groups of tournaments are solvable. This is not the case for general graphs, so for the open question of whether~$\GIlang$ reduces to~$\GAlang$ this may dampen our enthusiasm. However, what may bring our enthusiasm back up is that there are key classes of combinatorial objects that share properties similar to what we need.

In particular, this brings us to the question whether the techniques of the paper can be applied to group isomorphism. Just like for tournament isomorphism, finding a faster algorithm for group isomorphism (given by multiplication table) is a bottleneck for improving the run-time bound for isomorphism of general graphs beyond quasi-polynomial.
Since outer-automorphism groups of simple groups are solvable, we ask: Can we reduce the group isomorphism problem to the isomorphism problem for asymmetric groups? This question is significant since an asymmetry assumption on groups is typically a strong structural property and may help to solve the entire group isomorphism problem.
However, here one has to be careful to find the right notion of asymmetry since all groups have inner automorphisms. For such notions different possibilities come to mind. 

A second natural open question would be whether there is a deterministic version of the algorithms given in this paper. 

As a last open problem recall that it was shown in Section~\ref{sec:sampling:subsets}  that one can extract a characteristic subset for a sampler over a set~$M$ in time that depends polynomially on~$M$. Since the automorphism group of a graph can be superpolynomial in the size of the graph, we had to take a detour via suborbits in Section~\ref{sec:sampling:minimimal:orbits}. There can be no general way to extract a characteristic subset of~$M$ in polynomial time if~$|M|$ is not polynomially bounded, since we might never see an element twice.
However, if~$M$ has an algebraic structure, in particular if~$M$ is a permutation group over a polynomial size set, this is not clear. 
Thus we ask: Is there a polynomial-time (randomized) algorithm that extracts a characteristic subgroup using a sampler~$\Gamma$ over a permutation group?

\bibliographystyle{plainurl}
\bibliography{main}

\begin{thebibliography}{10}

\bibitem{DBLP:journals/iandc/AgrawalA96}
Manindra Agrawal and Vikraman Arvind.
\newblock A note on decision versus search for graph automorphism.
\newblock {\em Inf. Comput.}, 131(2):179--189, 1996.
\newblock \href {http://dx.doi.org/10.1006/inco.1996.0097}
  {\path{doi:10.1006/inco.1996.0097}}.

\bibitem{DBLP:journals/corr/AllenderGM15}
Eric Allender, Joshua~A. Grochow, and Cristopher Moore.
\newblock Graph isomorphism and circuit size.
\newblock {\em CoRR}, abs/1511.08189, 2015.
\newblock URL: \url{http://arxiv.org/abs/1511.08189}.

\bibitem{DBLP:books/daglib/0023084}
Sanjeev Arora and Boaz Barak.
\newblock {\em Computational Complexity - {A} Modern Approach}.
\newblock Cambridge University Press, 2009.

\bibitem{DBLP:journals/siamcomp/ArvindBL00}
Vikraman Arvind, Richard Beigel, and Antoni Lozano.
\newblock The complexity of modular graph automorphism.
\newblock {\em {SIAM} J. Comput.}, 30(4):1299--1320, 2000.
\newblock \href {http://dx.doi.org/10.1137/S0097539799358227}
  {\path{doi:10.1137/S0097539799358227}}.

\bibitem{DBLP:journals/jcss/ArvindDM10}
Vikraman Arvind, Bireswar Das, and Partha Mukhopadhyay.
\newblock Isomorphism and canonization of tournaments and hypertournaments.
\newblock {\em J. Comput. Syst. Sci.}, 76(7):509--523, 2010.
\newblock \href {http://dx.doi.org/10.1016/j.jcss.2009.09.001}
  {\path{doi:10.1016/j.jcss.2009.09.001}}.

\bibitem{BabaiRandom}
L{\'{a}}szl{\'{o}} Babai.
\newblock Monte carlo algorithms in graph isomorphism testing.
\newblock Technical Report 79-{10}, Universit\'e de Montr\'eal, 1979.

\bibitem{MR1373683}
L{\'a}szl{\'o} Babai.
\newblock Automorphism groups, isomorphism, reconstruction.
\newblock In {\em Handbook of combinatorics, {V}ol.\ 2}, pages 1447--1540.
  Elsevier, Amsterdam, 1995.

\bibitem{DBLP:conf/stoc/Babai16}
L{\'{a}}szl{\'{o}} Babai.
\newblock Graph isomorphism in quasipolynomial time [extended abstract].
\newblock In {\em Proceedings of the 48th Annual {ACM} {SIGACT} Symposium on
  Theory of Computing, {STOC} 2016, Cambridge, MA, USA, June 18-21}, pages
  684--697. {ACM}, 2016.
\newblock \href {http://dx.doi.org/10.1145/2897518.2897542}
  {\path{doi:10.1145/2897518.2897542}}.

\bibitem{DBLP:conf/stoc/BabaiL83}
L{\'{a}}szl{\'{o}} Babai and Eugene~M. Luks.
\newblock Canonical labeling of graphs.
\newblock In {\em Proceedings of the 15th Annual {ACM} Symposium on Theory of
  Computing, {STOC} 1983, Boston, Massachusetts, {USA}, 25-27 April}, pages
  171--183. {ACM}, 1983.
\newblock \href {http://dx.doi.org/10.1145/800061.808746}
  {\path{doi:10.1145/800061.808746}}.

\bibitem{boothcolbourn}
Kellogg~S. Booth and Charles~J. Colbourn.
\newblock Problems polynomially equivalent to graph isomorphism.
\newblock Technical Report CS-77-04, Comp. Sci. Dep., Univ. Waterloo, 1979.

\bibitem{DBLP:journals/qic/ChildsW07}
Andrew~M. Childs and Pawel Wocjan.
\newblock On the quantum hardness of solving isomorphism problems as nonabelian
  hidden shift problems.
\newblock {\em Quantum Information {\&} Computation}, 7(5):504--521, 2007.

\bibitem{DBLP:conf/focs/FurstHL80}
Merrick~L. Furst, John~E. Hopcroft, and Eugene~M. Luks.
\newblock Polynomial-time algorithms for permutation groups.
\newblock In {\em 21st Annual Symposium on Foundations of Computer Science,
  Syracuse, New York, USA, 13-15 October, {FOCS} 1980}, pages 36--41. {IEEE}
  Computer Society, 1980.
\newblock \href {http://dx.doi.org/10.1109/SFCS.1980.34}
  {\path{doi:10.1109/SFCS.1980.34}}.

\bibitem{Ghosh2014}
Sumanta Ghosh and Piyush~P. Kurur.
\newblock Permutation groups and the graph isomorphism problem.
\newblock In {\em Perspectives in Computational Complexity: The Somenath Biswas
  Anniversary Volume}, pages 183--202. Springer International Publishing, Cham,
  2014.
\newblock \href {http://dx.doi.org/10.1007/978-3-319-05446-9_11}
  {\path{doi:10.1007/978-3-319-05446-9_11}}.

\bibitem{DBLP:journals/siamcomp/HallgrenRT03}
Sean Hallgren, Alexander Russell, and Amnon Ta{-}Shma.
\newblock The hidden subgroup problem and quantum computation using group
  representations.
\newblock {\em {SIAM} J. Comput.}, 32(4):916--934, 2003.
\newblock \href {http://dx.doi.org/10.1137/S009753970139450X}
  {\path{doi:10.1137/S009753970139450X}}.

\bibitem{MR1232421}
Johannes K{\"o}bler, Uwe Sch{\"o}ning, and Jacobo Tor{\'a}n.
\newblock {\em The graph isomorphism problem: its structural complexity}.
\newblock Progress in Theoretical Computer Science. Birkh\"auser Boston, Inc.,
  Boston, MA, 1993.
\newblock \href {http://dx.doi.org/10.1007/978-1-4612-0333-9}
  {\path{doi:10.1007/978-1-4612-0333-9}}.

\bibitem{DBLP:conf/alenex/KutzS07}
Martin Kutz and Pascal Schweitzer.
\newblock Screwbox: a randomized certifying graph-non-isomorphism algorithm.
\newblock In {\em Proceedings of the Nine Workshop on Algorithm Engineering and
  Experiments, {ALENEX} 2007, New Orleans, Louisiana, USA, January 6, 2007}.
  {SIAM}, 2007.
\newblock \href {http://dx.doi.org/10.1137/1.9781611972870.14}
  {\path{doi:10.1137/1.9781611972870.14}}.

\bibitem{DBLP:journals/jcss/Luks82}
Eugene~M. Luks.
\newblock Isomorphism of graphs of bounded valence can be tested in polynomial
  time.
\newblock {\em J. Comput. Syst. Sci.}, 25(1):42--65, 1982.
\newblock \href {http://dx.doi.org/10.1016/0022-0000(82)90009-5}
  {\path{doi:10.1016/0022-0000(82)90009-5}}.

\bibitem{DBLP:journals/ipl/Mathon79}
Rudolf Mathon.
\newblock A note on the graph isomorphism counting problem.
\newblock {\em Inf. Process. Lett.}, 8(3):131--132, 1979.
\newblock \href {http://dx.doi.org/10.1016/0020-0190(79)90004-8}
  {\path{doi:10.1016/0020-0190(79)90004-8}}.

\bibitem{DBLP:journals/jsc/McKayP14}
Brendan~D. McKay and Adolfo Piperno.
\newblock Practical graph isomorphism, {II}.
\newblock {\em J. Symb. Comput.}, 60:94--112, 2014.
\newblock \href {http://dx.doi.org/10.1016/j.jsc.2013.09.003}
  {\path{doi:10.1016/j.jsc.2013.09.003}}.

\bibitem{Ponomarenko1994}
I.~N. Ponomarenko.
\newblock Polynomial time recognition and testing of isomorphism of cyclic
  tournaments.
\newblock {\em Journal of Mathematical Sciences}, 70(4):1890--1911, 1994.
\newblock \href {http://dx.doi.org/10.1007/BF02112430}
  {\path{doi:10.1007/BF02112430}}.

\bibitem{Ponomarenko2013}
I.~N. Ponomarenko.
\newblock Bases of schurian antisymmetric coherent configurations and an
  isomorphism test for schurian tournaments.
\newblock {\em Journal of Mathematical Sciences}, 192(3):316--338, 2013.
\newblock \href {http://dx.doi.org/10.1007/s10958-013-1398-2}
  {\path{doi:10.1007/s10958-013-1398-2}}.

\bibitem{SchweitzerThesis}
Pascal Schweitzer.
\newblock {\em Problems of unknown complexity: Graph isomorphism and {R}amsey
  theoretic numbers}.
\newblock {PhD} thesis, Universit\"{a}t des {S}aarlandes, Germany, 2009.

\bibitem{DBLP:conf/stacs/Schweitzer15}
Pascal Schweitzer.
\newblock Towards an isomorphism dichotomy for hereditary graph classes.
\newblock In {\em 32nd International Symposium on Theoretical Aspects of
  Computer Science, {STACS} 2015, March 4-7, 2015, Garching, Germany},
  volume~30 of {\em LIPIcs}, pages 689--702. Schloss Dagstuhl - Leibniz-Zentrum
  fuer Informatik, 2015.
\newblock \href {http://dx.doi.org/10.4230/LIPIcs.STACS.2015.689}
  {\path{doi:10.4230/LIPIcs.STACS.2015.689}}.

\bibitem{DBLP:journals/siamcomp/Toran04}
Jacobo Tor{\'{a}}n.
\newblock On the hardness of graph isomorphism.
\newblock {\em {SIAM} J. Comput.}, 33(5):1093--1108, 2004.
\newblock \href {http://dx.doi.org/10.1137/S009753970241096X}
  {\path{doi:10.1137/S009753970241096X}}.

\bibitem{DBLP:conf/mfcs/Wagner07}
Fabian Wagner.
\newblock Hardness results for tournament isomorphism and automorphism.
\newblock In {\em Mathematical Foundations of Computer Science 2007, 32nd
  International Symposium, {MFCS} 2007, Cesk{\'{y}} Krumlov, Czech Republic,
  August 26-31, 2007, Proceedings}, volume 4708 of {\em Lecture Notes in
  Computer Science}, pages 572--583. Springer, 2007.
\newblock \href {http://dx.doi.org/10.1007/978-3-540-74456-6_51}
  {\path{doi:10.1007/978-3-540-74456-6_51}}.

\bibitem{DBLP:conf/sofsem/Wagner08}
Fabian Wagner.
\newblock Hardness results for isomorphism and automorphism of bounded valence
  graphs.
\newblock In {\em {SOFSEM} 2008: Theory and Practice of Computer Science, 34th
  Conference on Current Trends in Theory and Practice of Computer Science,
  Nov{\'{y}} Smokovec, Slovakia, January 19-25, 2008, Volume {II}}, pages
  131--140. Safarik University, Kosice, Slovakia, 2008.

\bibitem{WagnerThesis}
Fabian Wagner.
\newblock {\em On the Complexity of Isomorphism Testing for Restricted Classes
  of Graphs}.
\newblock {PhD} thesis, Universit\"{a}t Ulm, Germany, 2010.

\end{thebibliography}

\end{document}